\documentclass[11pt]{article}

\usepackage{research5}
\usepackage{epsfig}
\usepackage{psfrag}
\usepackage{amsthm}
\usepackage{IEEEtrantools}
\usepackage{verbatim}

\setlength{\voffset}        {-25mm}
\setlength{\oddsidemargin}  {1.05cm}
\setlength{\evensidemargin} {0.0cm}
\setlength{\textwidth}      {14.5cm}
\setlength{\textheight}     {23.6cm}

\theoremstyle{plain}
\newtheorem{thm}{Theorem}[section]
\newtheorem{prp}{Proposition}[section]
\newtheorem{cor}{Corollary}[section]
\newtheorem{lm}{Lemma}[section]
\theoremstyle{definition}
\newtheorem{dfn}{Definition}[section]

\newtheorem{rdc}{Reduction}[section]
\newtheorem{rmk}{Remark}[section]

\title{Broadcasting Correlated Gaussians}

\author{Shraga Bross \and Amos Lapidoth \and Stephan Tinguely}

\date{}

\begin{document}

\maketitle

\begin{abstract}
  \renewcommand{\thefootnote}{}

  We consider the transmission of a memoryless bivariate Gaussian
  source over an average-power-constrained one-to-two Gaussian
  broadcast channel. The transmitter observes the source and describes
  it to the two receivers by means of an average-power-constrained
  signal. Each receiver observes the transmitted signal corrupted by a
  different additive white Gaussian noise and wishes to estimate the
  source component intended for it. That is, Receiver~1 wishes to estimate the
  first source component and Receiver~2 wishes to estimate the second
  source component. Our interest is in the pairs of expected
  squared-error distortions that are simultaneously achievable at the
  two receivers.

  We prove that an uncoded transmission scheme that sends a linear
  combination of the source components achieves the optimal
  power-versus-distortion trade-off whenever the signal-to-noise ratio
  is below a certain threshold. The threshold is a function of the
  source correlation and the distortion at the receiver with the
  weaker noise.

\footnote{The work of Stephan Tinguely was partially supported by the
  Swiss National Science Foundation under Grant 200021-111863/1. The
  results in this paper were presented in part at the 2008
  IEEE International Symposium on Information Theory, Toronto, CA.

  S.~Bross is with the School of Engineering, Bar-Ilan University,
  Ramat Gan 52900, Israel (email:~brosss@macs.biu.ac.il).

  A.~Lapidoth and S.~Tinguely are with the Signal and Information
  Processing Laboratory (ISI), ETH Zurich, Switzerland
  (e-mail:~lapidoth@isi.ee.ethz.ch; tinguely@isi.ee.ethz.ch).}
\setcounter{footnote}{0}
\end{abstract}

\section{Introduction}

We consider the transmission of a memoryless bivariate Gaussian source
over an average-power-constrained one-to-two Gaussian broadcast
channel. The transmitter observes the source and describes it to the
two receivers by means of an average-power-constrained signal. Each
receiver observes the transmitted signal corrupted by a different
additive white Gaussian noise and wishes to estimate the source
component intended for it. That is, Receiver~1 wishes to estimate the
first source component and Receiver~2 wishes to estimate the second
source component. Our interest is in the pairs of expected
squared-error distortions that are simultaneously achievable at the
two receivers.

We prove that an uncoded transmission scheme that sends a linear
combination of the source components achieves the optimal
power-versus-distortion trade-off whenever the signal-to-noise ratio
is below a certain threshold. The threshold is a function of the
source correlation and the distortion at the receiver with the weaker
noise.

This result is reminiscent of the results in
\cite{lapidoth-tinguely08-mac-it, lapidoth-tinguely08-macfb-it} about
the optimality of uncoded transmission of a bivariate Gaussian source
over a Gaussian multiple-access channel, without and with
feedback. There too, uncoded transmission is optimal below a certain
SNR-threshold. This work is also related to the classical result of
Goblick~\cite{goblick65}, who showed that for the transmission of a
memoryless Gaussian source over the additive white Gaussian noise
channel, the minimal expected squared-error distortion is achieved by
an uncoded transmission scheme. It is also related to the work of
Gastpar~\cite{gastpar07} who showed for some combined source-channel
coding analog of the quadratic Gaussian CEO problem that the minimal
expected squared-error distortion is achieved by an uncoded
transmission scheme.

\section{Problem Statement}

Our setup is illustrated in Figure \ref{fig:bc-basic}.
\begin{figure}[h]
 \centering
 \psfrag{s1}[cc][cc]{$S_{1,k}$}
 \psfrag{s2}[cc][cc]{$S_{2,k}$}
 \psfrag{x}[cc][cc]{$X_k$}
 \psfrag{src}[cc][cc]{Source}
 \psfrag{tx}[cc][cc]{$f^{(n)}(\cdot,\cdot)$}
 \psfrag{z1}[cc][cc]{$Z_{1,k}$}
 \psfrag{z2}[cc][cc]{$Z_{2,k}$}
 \psfrag{y1}[cc][cc]{$Y_{1,k}$}
 \psfrag{y2}[cc][cc]{$Y_{2,k}$}
 \psfrag{rx1}[cc][cc]{$\phi_1^{(n)}(\cdot)$}
 \psfrag{rx2}[cc][cc]{$\phi_2^{(n)}(\cdot)$}
 \psfrag{s1h}[cc][cc]{$\hat{S}_{1,k}$}
 \psfrag{s2h}[cc][cc]{$\hat{S}_{2,k}$}
 \epsfig{file=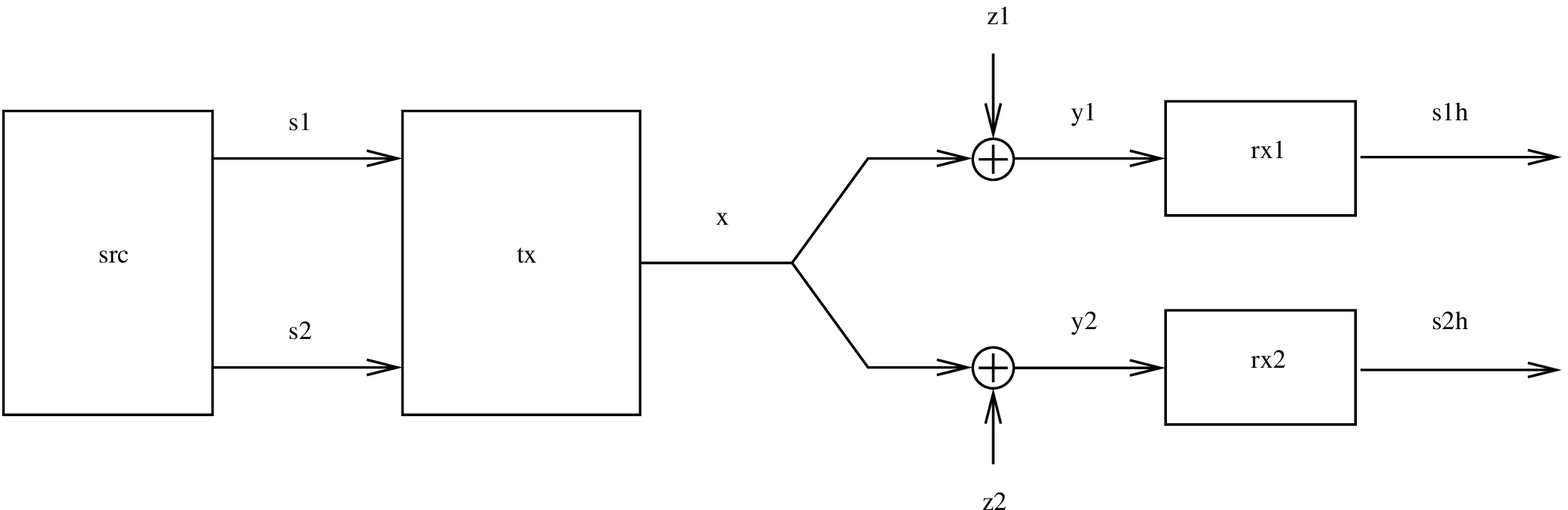, width=0.8\textwidth}
 \caption{Two-user Gaussian broadcast channel with bivariate source.}
 \label{fig:bc-basic}
\end{figure}
It consists of a memoryless bivariate Gaussian source and a one-to-two
Gaussian broadcast channel. The memoryless source emits at each time
$k \in \Integers$ a bivariate Gaussian $(S_{1,k},S_{2,k})$ of zero
mean and covariance matrix\footnote{The restrictions made on
  $\cov{S}$, i.e., that $\rho \in [0,1)$ and that $\Var{S_{1,k}} =
  \Var{S_{2,k}} = \sigma^2$ will be justified in Remark~\ref{rmk:source}, once
  the problem has been stated completely.}
\begin{equation}\label{eq:source-law}
\cov{S} = \sigma^2 \left( \begin{array}{c c}
1 & \rho\\
\rho & 1
\end{array} \right),
\qquad \text{where} \quad \rho \in [0,1). 
\end{equation}
The source is to be transmitted over a memoryless Gaussian broadcast
channel with time-$k$ input $x_k \in \Reals$, which is subjected to an
expected average power constraint
\begin{equation}\label{eq:power-constr}
\frac{1}{n} \sum_{k=1}^n \E{X_k^2} \leq P,
\end{equation}
for some given $P>0$. The time-$k$ output $Y_{i,k}$ at Receiver~$i$ is
given by
\begin{displaymath}
Y_{i,k} = x_k + Z_{i,k} \qquad i \in \{1,2\},
\end{displaymath}
where $Z_{i,k}$ is the time-$k$ additive noise term on the channel to
Receiver~$i$. For each $i\in \{ 1,2 \}$ the sequence $\{ Z_{i,k}
\}_{k=1}^{\infty}$ is independent identically distributed (IID)
$\Normal{0}{N_i}$ and independent of the source sequence $\{ (S_{1,k},
S_{2,k}) \}$, where $\Normal{\mu}{\nu^2}$ denotes the mean-$\mu$
variance-$\nu^2$ Gaussian distribution and where we
assume that\footnote{The case $N_1 = N_2$ is equivalent to the problem of
  sending a bivariate Gaussian on a single-user Gaussian channel
  \cite{lapidoth-tinguely08-mac-it}.}
\begin{equation}
N_1 < N_2.
\end{equation}

For the transmission we consider block encoding schemes where, for
blocklength $n$, the transmitted sequence ${\bf X} = ( X_1, X_2, \ldots
,X_n)$ is given by
\begin{equation}\label{eq:encoder}
{\bf X} = f^{(n)}({\bf S}_1, {\bf S}_2),
\end{equation}
for some encoding function $f^{(n)} \colon \Reals^n \times \Reals^n
\rightarrow \Reals^n$, and where we use boldface characters to denote
$n$-tuples, e.g.~${\bf S}_1 = (S_{1,1}, S_{1,2}, \ldots ,S_{1,n})$.
Receiver~$i$'s estimate $\hat{\bf S}_i$ of the source sequence ${\bf
  S}_i$ intended for it, is a function $\phi_i^{(n)} \colon \Reals^n
\rightarrow \Reals^n$ of its observation ${\bf Y}_i$,
\begin{equation}\label{eq:reconstr}
\hat{\bf S}_i = \phi_i^{(n)}({\bf Y}_i) \qquad i \in \{1,2\}.
\end{equation}
The quality of the estimate $\hat{\bf S}_i$ with respect to the
original source sequence ${\bf S}_i$ is measured in expected
squared-error distortion averaged over the blocklength $n$. We denote
this distortion by $\delta_i^{(n)}$, i.e.
\begin{equation}\label{eq:delta-i-n}
  \delta_i^{(n)} \eqdef \frac{1}{n} \sum_{k=1}^n \E{(S_{i,k} -
    \hat{S}_{i,k})^2} \qquad i \in \{1,2\}.
\end{equation}
Our interest is in the set of distortion pairs that can be achieved
simultaneously at the two receivers as the blocklength $n$ tends to
infinity. This notion of achievability is described more precisely in
the following definition.
\begin{dfn}[Achievability]\label{df:achievability}
  Given $\sigma^2>0$, $\rho \in [0,1)$, $P>0$ and $0 < N_1 \leq N_2$,
  we say that the tuple $(D_1,D_2,\sigma^2,\rho,P,N_1,N_2)$ is
  \emph{achievable} (or in short, that the pair $(D_1,D_2)$ is
  achievable) if there exist a sequence of encoding functions $\big\{
  f^{(n)} \big\}$ as in \eqref{eq:encoder} satisfying the average
  power constraint \eqref{eq:power-constr} and sequences of
  reconstruction functions $\big\{ \phi_1^{(n)} \big\}$, $\big\{
  \phi_2^{(n)} \big\}$ as in \eqref{eq:reconstr} with resulting
  average distortions $\delta_1^{(n)}$, $\delta_2^{(n)}$ as in
  \eqref{eq:delta-i-n} that fulfill
\begin{displaymath}
\varlimsup_{n \rightarrow \infty} \delta_i^{(n)} \leq D_i \qquad i \in
\{1,2\},
\end{displaymath}
whenever
\begin{equation}\label{eq:Yi}
{\bf Y}_i = f^{(n)} ({\bf S}_1, {\bf S}_2) + {\bf Z}_i \qquad i \in \{1,2\},
\end{equation}
for $\{ (S_{1,k}, S_{2,k}) \}$ an IID sequence of zero-mean bivariate
Gaussians with covariance matrix as in \eqref{eq:source-law} and $\{
Z_{i,k} \}_{k=1}^{\infty}$ IID zero-mean Gaussians of variance $N_i$,
$i \in \{1,2\}$.
\end{dfn}

Based on Definition \ref{df:achievability}, we next define the set of all
achievable distortion pairs.

\begin{dfn}[$\mathscr{D}(\sigma^2,\rho,P,N_1,N_2)$]
  For any $\sigma^2$, $\rho$, $P$, $N_1$, and $N_2$ as in Definition
  \ref{df:achievability} we define
  $\mathscr{D}(\sigma^2,\rho,P,N_1,N_2)$ (or just $\mathscr{D}$) as
  the region of all pairs $(D_1,D_2)$ for which
  $(D_1,D_2,\sigma^2,\rho,P,N_1,N_2)$ is achievable, i.e.
  \begin{IEEEeqnarray*}{l}
    \mathscr{D}(\sigma^2,\rho,P,N_1,N_2) = \big\{ (D_1,D_2):
    (D_1,D_2,\sigma^2,\rho,P,N_1,N_2) \text{ is achievable} \big\}.\\
  \end{IEEEeqnarray*}
\end{dfn}

\begin{rmk}\label{rmk:prpty-D}
  The region $\mathscr{D}$ is closed and convex.
\end{rmk}
\begin{proof}
  See Appendix~\ref{apx:prf-D-closed}.
\end{proof}

\begin{rmk}\label{rmk:source}
  In the description of the source law in \eqref{eq:source-law}, we
  have excluded the case where $\rho=1$. We have done so because for
  this case the optimality of uncoded transmission follows immediately
  for all SNRs from the corresponding result for the single user
  scenario in \cite{goblick65}. Moreover, we have also assumed that
  the source components are of equal variance and that their
  correlation coefficient $\rho$ is nonnegative. We now show that
  these two assumptions incur no loss in generality.\\

  \renewcommand{\labelenumi}{\roman{enumi})}
  \begin{enumerate}
  \item We can limit ourselves to nonnegative correlation coefficients
    $\rho$ because the distortion region $\mathscr{D}$ depends
    on the correlation coefficient only via its absolute value $|\rho
    |$. That is, the tuple $(D_{1}, D_{2}, \sigma^{2}, \rho, P,
    N_1,N_2)$ is achievable if, and only if, the tuple $(D_{1}, D_{2},
    \sigma^{2}, -\rho, P, N_1, N_2)$ is achievable.  To see this, note
    that if $\{ f^{(n)}, \phi_{1}^{(n)}, \phi_{2}^{(n)} \}$ achieves
    the distortion $(D_{1}, D_{2})$ for the source of correlation
    coefficient $\rho$, then $\{ \tilde{f}_{1}^{(n)},
    \tilde{\phi}_{1}^{(n)}, \phi_{2}^{(n)} \}$, where
    \begin{equation*}
      \tilde{f}_{1}^{(n)}( {\bf S}_{1}, {\bf S}_{2} ) = f^{(n)}( -
      {\bf S}_{1}, {\bf S}_{2} ) \qquad \text{and} \qquad 
      \tilde{\phi}_{1}^{(n)}( {\bf Y} ) = - \phi_{1}^{(n)}( {\bf Y} ) 
    \end{equation*}
    achieves  $(D_{1}, D_{2})$ for the source with correlation
    coefficient $-\rho$.\\

  \item The restriction to source components of equal variances incurs
    no loss of generality because the distortion region scales
    linearly with the variance of the source components. To see this,
    consider the more general case where the two source components are
    not necessarily of equal variances, i.e., where $\Var{S_{1,k}} =
    \sigma_1^2$ and $\Var{S_{2,k}} = \sigma_2^2$ for some $\sigma_1^2,
    \sigma_2^2 > 0$ and for all $k \in \Integers$.  Accordingly,
    define a tuple $(D_1,D_2,\sigma_1^2, \sigma_2^2,\rho,P,N_1,N_2)$
    to be achievable, similarly as in
    Definiton~\ref{df:achievability}. The proof now follows from
    showing that the tuple $(D_{1}, D_{2}, \sigma_{1}^{2},
    \sigma_{2}^{2}, \rho, P, N_1, N_2)$ is achievable if, and only if,
    for every $\alpha_{1}, \alpha_{2} \in \Reals^+$, the tuple
    $(\alpha_{1}D_{1}, \alpha_{2}D_{2}, \alpha_{1} \sigma_{1}^{2},$
    $\alpha_{2} \sigma_{2}^{2}, \rho, P, N_1, N_2)$ is achievable. This
    can be seen as follows. If $\{ f^{(n)}, \phi_{1}^{(n)},
    \phi_{2}^{(n)} \}$ achieves the tuple $(D_{1}, D_{2},
    \sigma_{1}^{2}, \sigma_{2}^{2},$ $\rho, P, N_1, N_2)$, then $\{
    \tilde{f}^{(n)}, \tilde{\phi}_{1}^{(n)}, \tilde{\phi}_{2}^{(n)}
    \}$ where
    \begin{equation*}
      \tilde{f}^{(n)}( {\bf S}_{1}, {\bf S}_2) = f^{(n)} \left( \frac{{\bf
            S}_{1}}{\sqrt{\alpha_{1}}}, \frac{{\bf
            S}_{2}}{\sqrt{\alpha_{2}}} \right),
    \end{equation*}
    and where
    \begin{equation*}
      \tilde{\phi}_{i}^{(n)}( {\bf Y} ) = \sqrt{\alpha_{i}} \cdot
      \phi_{i}^{(n)}( {\bf Y} ), \hspace{10mm} i \in \{ 1,2 \},
    \end{equation*}
    achieves the tuple $(\alpha_1 D_{1}, \alpha_2 D_{2}, \alpha_{1}
    \sigma_{1}^{2}, \alpha_{2} \sigma_{2}^{2}, \rho, P, N_1,
    N_2)$. And by an analogous argument it follows that if $(\alpha_1
    D_{1}, \alpha_2 D_{2}, \alpha_{1} \sigma_{1}^{2}, \alpha_{2}
    \sigma_{2}^{2}, \rho, P, N_1, N_2)$ is achievable, then also
    $(D_{1}, D_{2}, \sigma_{1}^{2}, \sigma_{2}^{2}, \rho,
    P, N_1, N_2)$ is achievable.\\
  \end{enumerate}
\end{rmk}

We state one more property of the region $\mathscr{D}$. To this end,
we need the following two definitions.

\begin{dfn}[$D_{i,\textnormal{min}}$]\label{df:dimin}
  We say that $D_1$ is achievable if there exists some $D_2$ such that
  $(D_1,D_2) \in \mathscr{D}$. The smallest achievable $D_1$ is
  denoted by $D_{1,\textnormal{min}}$. The achievability of $D_2$ and
  the distortion $D_{2,\textnormal{min}}$ are analogously defined.
\end{dfn}

By the classical single-user result \cite[Theorem 9.6.3, p.~473]{gallager68}
\begin{displaymath}
  D_{i,\textnormal{min}} \eqdef \sigma^2 \frac{N_i}{N_i + P} \qquad i
  \in \{ 1,2 \}.
\end{displaymath}

\begin{dfn}[$D_1^{\ast}(D_2)$ and $D_2^{\ast}(D_1)$]\label{df:Di-star}
  For every achievable $D_2$, we define $D_1^{\ast}(D_2)$ as the
  smallest $D_1'$ such that $(D_1',D_2)$ is achievable, i.e.,
  \begin{displaymath}
    D_1^{\ast}(D_2) \eqdef \min \left\{ D_1' : (D_1',D_2) \in \mathscr{D}
    \right\}.
  \end{displaymath}
  Similarly,
  \begin{displaymath}
    D_2^{\ast}(D_1) \eqdef \min \left\{ D_2' : (D_1,D_2') \in \mathscr{D}
    \right\}.
  \end{displaymath}
\end{dfn}
In general, we have no closed-form expression for $D_1^{\ast}(\cdot)$
and $D_2^{\ast}(\cdot)$. However, in the following two special cases
we do:

\begin{prp}\label{prp:d1star-of-d2min}
  The distortion $D_1^{\ast}(D_{2,\textnormal{min}})$ is given by
  \begin{align}
    D_1^{\ast}(D_{2,\textnormal{min}}) &= \sigma^2
    \frac{N_1+P(1-\rho^2)}{N_1+P}.\label{eq:D1star-D2min}
  \end{align}
  The distortion pair
  $(D_1^{\ast}(D_{2,\textnormal{min}}),D_{2,\textnormal{min}})$ is
  achieved by setting $X_k = \sqrt{P/\sigma^2} S_{2,k}$.
\end{prp}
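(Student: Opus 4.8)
The proposition has two halves: that the uncoded scheme $X_k=\sqrt{P/\sigma^2}\,S_{2,k}$ attains the pair $\bigl(\sigma^{2}\frac{N_1+P(1-\rho^2)}{N_1+P},\,D_{2,\textnormal{min}}\bigr)$, and that no achievable pair with $D_2=D_{2,\textnormal{min}}$ has a smaller first component. For the achievability half I would simply verify the scheme: the power constraint \eqref{eq:power-constr} holds with equality, and each $Y_{i,k}=\sqrt{P/\sigma^2}\,S_{2,k}+Z_{i,k}$ depends on the source only through $S_{2,k}$, so letting both receivers use the scalar LMMSE (equivalently MMSE) estimator, Receiver~2 incurs the classical distortion $\sigma^2N_2/(P+N_2)=D_{2,\textnormal{min}}$. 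Writing $S_{1,k}=\rho S_{2,k}+W_k$ with $W_k\sim\Normal{0}{\sigma^2(1-\rho^2)}$ independent of $(S_{2,k},Z_{1,k})$, Receiver~1's best estimate is $\E{S_{1,k}\mid Y_{1,k}}=\rho\,\E{S_{2,k}\mid Y_{1,k}}$, giving distortion $\rho^{2}\sigma^2N_1/(P+N_1)+\sigma^2(1-\rho^2)=\sigma^{2}\frac{N_1+P(1-\rho^2)}{N_1+P}$. Since these distortions are attained for every $n$, the pair is achievable.

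For the converse I would fix a sequence of schemes achieving $(D_1,D_{2,\textnormal{min}})$. First, the classical single-user source--channel converse at Receiver~2 (via $\tfrac n2\log\tfrac{\sigma^2}{\delta_2^{(n)}}\le I({\bf S}_2;{\bf Y}_2)\le I({\bf X};{\bf Y}_2)\le\tfrac n2\log(1+P/N_2)$) gives $\delta_2^{(n)}\ge D_{2,\textnormal{min}}$ for every $n$, so the achievability requirement forces $\delta_2^{(n)}\to D_{2,\textnormal{min}}$; set $\epsilon_n\eqdef\delta_2^{(n)}-D_{2,\textnormal{min}}\ge0$. Since $\mathscr{D}$ depends only on the marginal channel laws, I may replace the channel by the physically degraded one ${\bf Y}_2={\bf Y}_1+{\bf Z}'$ with ${\bf Z}'$ IID $\Normal{0}{N_2-N_1}$ and independent of $({\bf S}_1,{\bf S}_2,{\bf Z}_1)$. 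The heart of the converse is then the following chain. Combining the max-entropy bound $h({\bf Y}_2)\le\tfrac n2\log\bigl(2\pi e(P+N_2)\bigr)$ with the single-user converse at Receiver~2 gives $h({\bf Y}_2\mid{\bf S}_2)\le\tfrac n2\log\bigl(2\pi e(P+N_2)\delta_2^{(n)}/\sigma^2\bigr)$; feeding this into the conditional entropy-power inequality for ${\bf Y}_2={\bf Y}_1+{\bf Z}'$ (conditioned on ${\bf S}_2$) yields $h({\bf Y}_1\mid{\bf S}_2)\le\tfrac n2\log\bigl(2\pi e\,(N_1+(P+N_2)\epsilon_n/\sigma^2)\bigr)$. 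With $S_{1,k}=\rho S_{2,k}+W_k$ and ${\bf W}$ IID $\Normal{0}{\tilde\sigma^2}$, $\tilde\sigma^2\eqdef\sigma^2(1-\rho^2)$, and using $I({\bf W};{\bf Y}_1\mid{\bf S}_2)=h({\bf Y}_1\mid{\bf S}_2)-h({\bf Z}_1)$, this reads $I({\bf W};{\bf Y}_1\mid{\bf S}_2)\le\tfrac n2\eta_n$ with $\eta_n\eqdef\log\bigl(1+(P+N_2)\epsilon_n/(\sigma^2N_1)\bigr)\to0$. Intuitively, at $D_{2,\textnormal{min}}$ the transmitter must spend all of its power describing ${\bf S}_2$, so ${\bf Y}_1$ asymptotically carries no information about the innovation ${\bf W}$.

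It remains to turn this into a bound on $\delta_1^{(n)}$. Since ${\bf W}\perp{\bf S}_2$, $I({\bf W};{\bf Y}_1\mid{\bf S}_2)=I({\bf W};{\bf Y}_1,{\bf S}_2)$, so the Gaussian rate--distortion converse applied to the IID source ${\bf W}$ gives $\tfrac1n\E{\|{\bf W}-\E{{\bf W}\mid{\bf Y}_1,{\bf S}_2}\|^2}\ge\tilde\sigma^2e^{-\eta_n}$; hence, by the orthogonality principle and the $L^2$-contraction property of conditional expectation, $\tfrac1n\E{\|\E{{\bf W}\mid{\bf Y}_1}\|^2}\le\tfrac1n\E{\|\E{{\bf W}\mid{\bf Y}_1,{\bf S}_2}\|^2}\le\tilde\sigma^2(1-e^{-\eta_n})\eqdef\tilde\sigma^2\theta_n$ with $\theta_n\to0$. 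On the other hand, the single-user converse at Receiver~1 (applied as if ${\bf S}_2$ were the source observed through ${\bf Y}_1$) gives $\tfrac1n\E{\|{\bf S}_2-\E{{\bf S}_2\mid{\bf Y}_1}\|^2}\ge D_{1,\textnormal{min}}$, i.e.\ $\tfrac1n\E{\|\E{{\bf S}_2\mid{\bf Y}_1}\|^2}\le\sigma^2-D_{1,\textnormal{min}}$. Using $\E{{\bf S}_1\mid{\bf Y}_1}=\rho\,\E{{\bf S}_2\mid{\bf Y}_1}+\E{{\bf W}\mid{\bf Y}_1}$ and the $L^2$ triangle inequality,
\[
  \delta_1^{(n)}\ \ge\ \sigma^2-\tfrac1n\,\E{\|\E{{\bf S}_1\mid{\bf Y}_1}\|^2}\ \ge\ \sigma^2-\bigl(\rho\sqrt{\sigma^2-D_{1,\textnormal{min}}}+\sqrt{\tilde\sigma^2\theta_n}\bigr)^{2},
\]
and letting $n\to\infty$ gives $D_1\ge\sigma^2-\rho^{2}(\sigma^2-D_{1,\textnormal{min}})=\sigma^2(1-\rho^2)+\rho^{2}D_{1,\textnormal{min}}$, which equals $\sigma^{2}\frac{N_1+P(1-\rho^2)}{N_1+P}$ once $D_{1,\textnormal{min}}=\sigma^2N_1/(N_1+P)$ is substituted. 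Together with achievability this proves \eqref{eq:D1star-D2min}.

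The step I expect to be the main obstacle is precisely the passage from $\delta_2^{(n)}\to D_{2,\textnormal{min}}$ to a usable constraint at Receiver~1: a naive Cauchy--Schwarz bound on the cross-term $\E{W_k\,\hat S_{1,k}}$ (equivalently, on the conditional covariance of $S_{2,k}$ and $W_k$ given ${\bf Y}_1$) is only $O(1)$, not $o(1)$, and does not close the gap. Routing through the conditional entropy-power inequality to obtain $I({\bf W};{\bf Y}_1\mid{\bf S}_2)=o(n)$ and then through the Gaussian rate--distortion converse to obtain $\E{\|\E{{\bf W}\mid{\bf Y}_1}\|^2}=o(n)$ is what makes the cross-term vanish; the $L^2$ triangle inequality then fuses the two single-user bounds without ever estimating the cross-term directly. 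A secondary point requiring care is bookkeeping the vanishing slacks $\epsilon_n,\eta_n,\theta_n$, since achievability supplies only $\delta_2^{(n)}\to D_{2,\textnormal{min}}$, not exact equality at finite~$n$.
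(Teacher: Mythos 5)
Your proposal is correct and follows essentially the same route as the paper: the decomposition ${\bf S}_1=\rho{\bf S}_2+{\bf W}_1$, the conditional entropy-power-inequality bound on $I({\bf S}_1;{\bf Y}_1|{\bf S}_2)$ in terms of $\delta_2^{(n)}$ (the paper's Lemma~\ref{lm:bc_cond-mut-info}/\ref{lm:ub-Iw1y1}), rate-distortion converses at both receivers, and achievability by the same uncoded scheme $\alpha=0$, $\beta=1$. The only cosmetic difference is the final assembly: the paper expands the error of the optimal estimate of ${\bf S}_1$ from ${\bf Y}_1$ into three terms and controls the cross term by Cauchy--Schwarz (Lemma~\ref{lm:lb-trm3}), whereas you bound the energy of the conditional mean via Minkowski's inequality---both give the same limiting bound.
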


\begin{proof}
  See Appendix~\ref{apx:prf-d1star-of-d2min}.
\end{proof}

\begin{prp}\label{prp:d2star-of-d1min}
  The distortion $D_2^{\ast}(D_{1,\textnormal{min}})$ is given by
  \begin{align}
    D_2^{\ast}(D_{1,\textnormal{min}}) &= \sigma^2
    \frac{N_2+P(1-\rho^2)}{N_2+P}.\label{eq:D2star-D1min}
  \end{align}
  The distortion pair
  $(D_{1,\textnormal{min}},D_2^{\ast}(D_{1,\textnormal{min}}))$ is
  achieved by setting $X_k = \sqrt{P/\sigma^2} S_{1,k}$.
\end{prp}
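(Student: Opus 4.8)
The plan is to prove \eqref{eq:D2star-D1min} by exhibiting an achieving scheme and a matching converse; as $\mathscr{D}$ is closed (Remark~\ref{rmk:prpty-D}), the two bounds identify the minimum defining $D_2^{\ast}(D_{1,\textnormal{min}})$. Achievability is routine. With $X_k=\sqrt{P/\sigma^2}\,S_{1,k}$ one has $\E{X_k^2}=P$; Receiver~1 reconstructs $S_{1,k}$ from $Y_{1,k}=\sqrt{P/\sigma^2}\,S_{1,k}+Z_{1,k}$ with MMSE $\sigma^2 N_1/(N_1+P)=D_{1,\textnormal{min}}$; and, since $(S_{1,k},S_{2,k})$ is zero-mean jointly Gaussian with correlation $\rho$, Receiver~2 reconstructs $S_{2,k}$ from $Y_{2,k}=\sqrt{P/\sigma^2}\,S_{1,k}+Z_{2,k}$ with MMSE $\sigma^2-\rho^2\sigma^2 P/(N_2+P)$, which equals the right-hand side of \eqref{eq:D2star-D1min}. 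The scheme is memoryless, so these hold per symbol and for every $n$.

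For the converse I would fix any sequence of schemes with $\varlimsup_n\delta_1^{(n)}\le D_{1,\textnormal{min}}$; since the single-user rate--distortion bound gives $\delta_1^{(n)}\ge D_{1,\textnormal{min}}$ for every $n$, necessarily $\delta_1^{(n)}\to D_{1,\textnormal{min}}$. Write $S_{2,k}=\rho S_{1,k}+W_k$ with $\{W_k\}$ i.i.d.\ $\Normal{0}{\sigma^2(1-\rho^2)}$ independent of $\mathbf{S}_1$, and --- using that the distortions depend only on the marginals of $(\mathbf{S}_i,\mathbf{Y}_i)$ and that $N_1<N_2$ --- couple the noises so that $\mathbf{Y}_2=\mathbf{Y}_1+\tilde{\mathbf{Z}}$ with $\tilde{\mathbf{Z}}$ independent of everything else, so $\mathbf{Y}_2$ is a degraded version of $\mathbf{Y}_1$. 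The first step is to argue that Receiver~1 working at $D_{1,\textnormal{min}}$ makes $\mathbf{W}$ nearly invisible to Receiver~2: chaining the rate--distortion converse $\frac{n}{2}\log(\sigma^2/\delta_1^{(n)})\le I(\mathbf{S}_1;\hat{\mathbf{S}}_1)$ with $I(\mathbf{S}_1;\hat{\mathbf{S}}_1)\le I(\mathbf{S}_1;\mathbf{Y}_1)=I(\mathbf{X};\mathbf{Y}_1)-I(\mathbf{W};\mathbf{Y}_1\mid\mathbf{S}_1)\le\frac{n}{2}\log(1+P/N_1)-I(\mathbf{W};\mathbf{Y}_1\mid\mathbf{S}_1)$ and rearranging gives $\frac1n I(\mathbf{W};\mathbf{Y}_1\mid\mathbf{S}_1)\le\frac12\log(\delta_1^{(n)}/D_{1,\textnormal{min}})\to0$; since $\mathbf{W}\perp\mathbf{S}_1$ and, given $\mathbf{S}_1$, $\mathbf{W}\to\mathbf{Y}_1\to\mathbf{Y}_2$ is Markov, data processing then yields $\varepsilon_n:=\frac1n I(\mathbf{W};\mathbf{Y}_2)\le\frac1n I(\mathbf{W};\mathbf{Y}_1\mid\mathbf{S}_1)\to0$, hence also $\frac1n\sum_k I(W_k;\mathbf{Y}_2)\le\varepsilon_n\to0$.

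The second step is to bound $\delta_2^{(n)}$ below by the conditional MMSE and decompose along $S_{2,k}=\rho S_{1,k}+W_k$: $\delta_2^{(n)}\ge\frac1n\sum_k\textnormal{mmse}(S_{2,k}\mid\mathbf{Y}_2)$, and $\textnormal{mmse}(S_{2,k}\mid\mathbf{Y}_2)=\rho^2\,\textnormal{mmse}(S_{1,k}\mid\mathbf{Y}_2)+\textnormal{mmse}(W_k\mid\mathbf{Y}_2)+2\rho\,c_k$, where $c_k=\E{(S_{1,k}-\E{S_{1,k}\mid\mathbf{Y}_2})(W_k-\E{W_k\mid\mathbf{Y}_2})}$. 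The first term averages to at least $\sigma^2 N_2/(N_2+P)$ by the single-user rate--distortion converse applied to the source $\mathbf{S}_1$ over Receiver~2's channel. For the second, the smallness of $\frac1n\sum_k I(W_k;\mathbf{Y}_2)$ together with the Gaussianity of $W_k$ (so $h(W_k\mid\mathbf{Y}_2)\le\frac12\log(2\pi e\,\textnormal{mmse}(W_k\mid\mathbf{Y}_2))$) and Jensen's inequality forces $\varliminf_n\frac1n\sum_k\textnormal{mmse}(W_k\mid\mathbf{Y}_2)\ge\sigma^2(1-\rho^2)$. Since $\rho^2\sigma^2 N_2/(N_2+P)+\sigma^2(1-\rho^2)=\sigma^2(N_2+P(1-\rho^2))/(N_2+P)$, it remains only to show the cross term vanishes asymptotically.

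I expect that last point to be the main obstacle, as $c_k$ need not vanish for a general (coded) scheme. The plan is to rewrite $c_k=-\E{\E{S_{1,k}\mid\mathbf{Y}_2}\,\E{W_k\mid\mathbf{Y}_2}}$, bound it by Cauchy--Schwarz using $\E{\E{S_{1,k}\mid\mathbf{Y}_2}^2}\le\sigma^2$, and control the remaining factor $\E{\E{W_k\mid\mathbf{Y}_2}^2}=\sigma^2(1-\rho^2)-\textnormal{mmse}(W_k\mid\mathbf{Y}_2)$ by $O\bigl(I(W_k;\mathbf{Y}_2)\bigr)$ --- again through the Gaussian entropy-versus-MMSE inequality and the estimate $1-2^{-x}\le x\ln2$. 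A further Cauchy--Schwarz/Jensen step over $k$ then gives $\frac1n\sum_k|c_k|=O(\sqrt{\varepsilon_n})\to0$. Letting $n\to\infty$ in the decomposition yields $\varliminf_n\delta_2^{(n)}\ge\sigma^2(N_2+P(1-\rho^2))/(N_2+P)$, which together with the achievability part and the closedness of $\mathscr{D}$ establishes \eqref{eq:D2star-D1min}.
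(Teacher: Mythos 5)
Your proposal is correct, but it follows a genuinely different route from the paper. The paper proves this proposition in two lines by appealing to Theorem~\ref{thm:bc_main}: at $D_1=D_{1,\textnormal{min}}$ the SNR condition \eqref{eq:snr-threshold} holds for every $P/N_1$, so the boundary point is achieved by the uncoded scheme with $\alpha=1$, $\beta=0$, which yields \eqref{eq:D2star-D1min} (and the paper is careful to note that this does not create a circularity, since the proposition is not used in proving the theorem). You instead give a self-contained achievability-plus-converse argument that mirrors the paper's direct proof of Proposition~\ref{prp:d1star-of-d2min} with the roles of the receivers swapped: you write $\mathbf{S}_2=\rho\mathbf{S}_1+\mathbf{W}$, show that operating at $D_{1,\textnormal{min}}$ forces $\frac{1}{n}I(\mathbf{W};\mathbf{Y}_1\mid\mathbf{S}_1)\to 0$, and transfer this to Receiver~2, then decompose the MMSE of $S_{2,k}$ given $\mathbf{Y}_2$ into the three terms and kill the cross term by Cauchy--Schwarz, exactly as the paper does in Lemma~\ref{lm:lb-trm3} for the mirror case. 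The interesting structural difference is in the transfer step: in the paper's Proposition~\ref{prp:d1star-of-d2min} the constraint sits at the weaker receiver and the bound is needed at the stronger one, which forces the conditional entropy-power inequality (Lemma~\ref{lm:bc_cond-mut-info}); in your direction the constraint sits at Receiver~1 and $N_1<N_2$, so physical degradedness ($\mathbf{Y}_2=\mathbf{Y}_1+\tilde{\mathbf{Z}}$) and conditional data processing suffice, with no EPI. I checked the steps you only sketch (the identity $I(\mathbf{X};\mathbf{Y}_1)=I(\mathbf{S}_1;\mathbf{Y}_1)+I(\mathbf{W};\mathbf{Y}_1\mid\mathbf{S}_1)$, the per-letter bound $\textnormal{mmse}(W_k\mid\mathbf{Y}_2)\ge\sigma^2(1-\rho^2)2^{-2I(W_k;\mathbf{Y}_2)}$ with Jensen over $k$, and the $O(\sqrt{\varepsilon_n})$ bound on the cross term) and they all go through. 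What each approach buys: the paper's proof is essentially free once the (much harder) Theorem~\ref{thm:bc_main} is in place, whereas yours is independent of that machinery, uses only elementary tools, and makes transparent why no SNR threshold is needed for this particular boundary point.
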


\begin{proof}
  The value of $D_2^{\ast}(D_{1,\textnormal{min}})$ follows from
  Theorem~\ref{thm:bc_main} ahead as follows: For $D_1 =
  D_{1,\textnormal{min}}$ it can be verified that condition
  \eqref{eq:snr-threshold} of Theorem \ref{thm:bc_main} is satisfied
  for all $P/N_1$. Hence, the pair $(D_{1,\textnormal{min}},
  D_2^{\ast}(D_{1,\textnormal{min}}))$ is always achieved by the
  uncoded scheme with $\alpha=1$, $\beta=0$, and so
  \begin{IEEEeqnarray*}{rCl}
    \hspace{41mm} D_2^{\ast}(D_{1,\textnormal{min}}) & = & \sigma^2 \frac{N_2 +
      P(1-\rho^2)}{N_2+P}. \hspace{41mm} \qedhere
  \end{IEEEeqnarray*}
  (This remark will not be used in the proof of Theorem~\ref{thm:bc_main}.)
\end{proof}

\section{Main Result}

Our main result states that, below a certain SNR-threshold, every pair
$(D_1,D_2) \in \mathscr{D}$ can be achieved by an uncoded scheme,
where for every time-instant $1\leq k \leq n$, the channel input is of the
form
\begin{equation}\label{eq:uncoded-input}
X_k^{\textnormal{u}}(\alpha,\beta) = \sqrt{\frac{P}{\sigma^2 (\alpha^2
    + 2\alpha\beta\rho + \beta^2)}} \left( \alpha S_{1,k} + \beta
  S_{2,k} \right),
\end{equation}
for some $\alpha, \beta \in \Reals$. The estimate
$\hat{S}_{i,k}^{\textnormal{u}}$ of $S_{i,k}$ (at Receiver~$i$), $i
\in \{1,2\}$, is the minimum mean squared-error estimate of $S_{i,k}$
based on the scalar observation $Y_{i,k}$, i.e.,
\begin{displaymath}
\hat{S}_{i,k}^{\textnormal{u}} = \E{S_{i,k} | Y_{i,k}}, \qquad i \in \{
1,2 \}.
\end{displaymath}
We denote the distortions resulting from this uncoded scheme by
$D_1^{\textnormal{u}}$ and $D_2^{\textnormal{u}}$. They are given by
\begin{IEEEeqnarray}{rCl}
  D_1^{\textnormal{u}}(\alpha,\beta) & = & \sigma^2 \frac{P^2 \beta^2 (1-\rho^2) +
    PN_1(\alpha^2 + 2\alpha \beta \rho + \beta^2 (2-\rho^2)) +
    N_1^2(\alpha^2 + 2\alpha\beta\rho
    + \beta^2)}{(P+N_1)^2 (\alpha^2 + 2 \alpha \beta \rho + \beta^2)}, \hspace{2mm}
  \nonumber\\[-2mm]
  & & \hspace{101mm}\\[3mm]
  D_2^{\textnormal{u}}(\alpha,\beta) & = & \sigma^2 \frac{P^2 \alpha^2 (1-\rho^2) +
    PN_2(\alpha^2(2-\rho^2) + 2\alpha \beta \rho + \beta^2) +
    N_2^2(\alpha^2 + 2\alpha\beta\rho + \beta^2)}{(P+N_2)^2 (\alpha^2
    + 2 \alpha \beta \rho + \beta^2)}. \hspace{2mm} \nonumber\\[-2mm]
  & & \hspace{101mm} \label{eq:expr-D2u}
\end{IEEEeqnarray}

\begin{rmk}
  In the reminder, we shall limit ourselves to transmission schemes
  with $\alpha \in [0,1]$ and $\beta = 1-\alpha$. This incurrs no loss
  in optimality, as we next show. For $\rho \geq 0$, an uncoded
  transmission scheme with the choice of $(\alpha,\beta)$ such that
  $\alpha \beta < 0$, yields a distortion that is uniformly worse than
  the choice $(| \alpha |, | \beta |)$. Thus, without loss in
  optimality, we can restrict ourselves to $\alpha, \beta \geq 0$. It
  remains to notice that for $\alpha, \beta \geq 0$, the channel input
  $X_k^{\textnormal{u}}(\alpha,\beta)$ depends on $\alpha$, $\beta$
  only via the ratio $\alpha / \beta$.
\end{rmk}

\noindent
Our main result can now be stated as follows.
\begin{thm}\label{thm:bc_main}
  For every $(D_1,D_2) \in \mathscr{D}$ and 
  \begin{equation}\label{eq:snr-threshold}
    \frac{P}{N_1} \leq \Gamma \left( D_1, \sigma^2, \rho \right),
  \end{equation}
  there exist $\alpha^{\ast},\beta^{\ast}\geq0$ such that
  \begin{displaymath}
    D_1^{\textnormal{u}}(\alpha^{\ast}, \beta^{\ast}) \leq D_1 \qquad
    \text{and} \qquad D_2^{\textnormal{u}}(\alpha^{\ast},
    \beta^{\ast}) \leq D_2,
  \end{displaymath}
  where the threshold $\Gamma$ is given by
  \begin{IEEEeqnarray*}{l}
    \Gamma \left( D_1, \sigma^2, \rho \right) =
    \left\{ \begin{array}{l l} \frac{\sigma^4(1-\rho^2)
          -2D_1\sigma^2(1-\rho^2) +
          D_1^2}{D_1(\sigma^2(1-\rho^2)-D_1)} & \text{if} \quad 0 <
        D_1 < \sigma^2
        (1-\rho^2),\\
        +\infty & \text{otherwise}.
      \end{array} \right.\\
  \end{IEEEeqnarray*}
\end{thm}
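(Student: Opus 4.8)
The plan is to separate achievability from the converse; the substance lies in the converse.

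Achievability is immediate: for any $\alpha,\beta$ the uncoded input $X_k^{\textnormal u}(\alpha,\beta)$ meets the power constraint, so $(D_1^{\textnormal u}(\alpha,\beta),D_2^{\textnormal u}(\alpha,\beta))\in\mathscr D$. By the remark preceding the theorem we may take $\alpha\in[0,1]$, $\beta=1-\alpha$, and one checks that $\alpha\mapsto D_1^{\textnormal u}(\alpha,1-\alpha)$ is continuous and strictly decreasing, equal to $D_{1,\textnormal{min}}$ at $\alpha=1$ and to $D_1^\ast(D_{2,\textnormal{min}})=\sigma^2\frac{N_1+P(1-\rho^2)}{N_1+P}$ at $\alpha=0$. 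Hence for every admissible $D_1$ there is $\alpha^\ast\in[0,1]$ with $D_1^{\textnormal u}(\alpha^\ast,1-\alpha^\ast)\le D_1$ (with equality when $D_1\le D_1^\ast(D_{2,\textnormal{min}})$, and $\alpha^\ast=0$, $D_2^{\textnormal u}=D_{2,\textnormal{min}}$, otherwise). Thus the theorem is equivalent to the converse assertion that if $(D_1,D_2)\in\mathscr D$ and $P/N_1\le\Gamma(D_1,\sigma^2,\rho)$ then $D_2\ge D_2^{\textnormal u}(\alpha^\ast,1-\alpha^\ast)$; equivalently, below the threshold the lower-left boundary of $\mathscr D$ coincides with the curve traced by the uncoded family.

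For the converse I would first pass to the physically degraded channel $Y_{2,k}=Y_{1,k}+\tilde Z_k$, with $\{\tilde Z_k\}$ i.i.d.\ $\Normal{0}{N_2-N_1}$ and independent of the source and of ${\bf Y}_1$; this changes neither the marginal law of $({\bf X},{\bf Y}_i)$ at either receiver nor therefore $\mathscr D$. Write $S_{1,k}=\rho S_{2,k}+V_k$ with $\{V_k\}$ i.i.d.\ $\Normal{0}{\sigma^2(1-\rho^2)}$ independent of $\{S_{2,k}\}$, and let a genie reveal ${\bf S}_2$ to Receiver~1; set $D_1'\eqdef\varlimsup_{n}\frac1n\sum_k\E{\big(S_{1,k}-\E{S_{1,k}|{\bf Y}_1,{\bf S}_2}\big)^2}$, so that $D_1'\le D_1$. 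Assembling the converse part of the rate--distortion theorem at Receiver~2, $\frac1n I({\bf S}_2;{\bf Y}_2)\ge\frac12\log\frac{\sigma^2}{D_2}$; the power constraint, $h({\bf Y}_2)\le\frac n2\log 2\pi e(P+N_2)$; the conditional entropy-power inequality applied to ${\bf Y}_2={\bf Y}_1+\tilde{\bf Z}$; the identity $h({\bf Y}_1|{\bf S}_2)=\frac n2\log 2\pi e N_1+I({\bf S}_1;{\bf Y}_1|{\bf S}_2)$ (valid since ${\bf X}$ is a deterministic function of $({\bf S}_1,{\bf S}_2)$, and $I({\bf S}_1;{\bf Y}_1|{\bf S}_2)=I({\bf V};{\bf Y}_1|{\bf S}_2)$); and the maximum-entropy bound $h({\bf V}|{\bf Y}_1,{\bf S}_2)\le\frac n2\log 2\pi e\,D_1'$, one arrives at a bound of the form
\[
 D_2\;\ge\;\sigma^2\,\frac{\sigma^2(1-\rho^2)\,N_1+(N_2-N_1)\,D_1'}{D_1'\,(P+N_2)},
\]
whose right-hand side is strictly decreasing in $D_1'$.

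The step I expect to be the main obstacle is that plugging in the trivial bound $D_1'=D_1$ here is \emph{insufficient}: for $\rho>0$ the resulting inequality lies strictly below the uncoded curve (it already falls short by $\sigma^2\rho^2N_1/(P+N_2)$ at $D_1=D_{1,\textnormal{min}}$). What is needed is a rigidity statement --- that achieving distortion $D_1$ at Receiver~1 while $P/N_1\le\Gamma(D_1,\sigma^2,\rho)$ forces $D_1'$ to be at most the genie-aided distortion of the uncoded scheme that achieves $D_1$. (At $D_1=D_{1,\textnormal{min}}$ this reduces to the fact that $D_{1,\textnormal{min}}$ is reachable only by, essentially, transmitting $X_k=\sqrt{P/\sigma^2}\,S_{1,k}$, the scheme of Proposition~\ref{prp:d2star-of-d1min}.) To obtain it one would run the same ingredients in the complementary direction --- conditioning on ${\bf S}_1$ and viewing ${\bf V}$ as the message destined for Receiver~1, together with the overall capacity bound $I({\bf X};{\bf Y}_1)\le\frac n2\log(1+P/N_1)$ --- producing additional inequalities linking $D_1'$, $D_1$, $\rho$, and $P/N_1$; the condition $P/N_1\le\Gamma(D_1,\sigma^2,\rho)$ is precisely the regime in which the resulting (quadratic) constraint pins $D_1'$ to its uncoded value, making the displayed bound tight. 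Above the threshold the constraint is slack and superposition-type coded schemes strictly outperform the uncoded family, which is why the threshold appears at all.

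The final step is bookkeeping. Putting $u=|\beta|/\sqrt{\alpha^2+2\alpha\beta\rho+\beta^2}$ and $v=|\alpha|/\sqrt{\alpha^2+2\alpha\beta\rho+\beta^2}$, one has $u^2+2\rho uv+v^2=1$ and $D_1^{\textnormal u}=\sigma^2\frac{N_1+P(1-\rho^2)u^2}{P+N_1}$, $D_2^{\textnormal u}=\sigma^2\frac{N_2+P(1-\rho^2)v^2}{P+N_2}$; substituting $D_1'$ at its (forced) uncoded value into the bound above and eliminating $u,v$ shows that the converse curve and the uncoded curve agree exactly on the range $P/N_1\le\Gamma(D_1,\sigma^2,\rho)$, which is the claim.
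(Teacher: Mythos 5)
Your first half matches the paper's machinery: the genie that gives ${\bf S}_2$ to Receiver~1, the rate--distortion converse at Receiver~2 combined with the conditional entropy-power inequality (this is exactly Lemma~\ref{lm:bc_cond-mut-info}), and the resulting bound $\delta_2^{(n)}\geq\frac{\sigma^2}{P+N_2}\bigl(\frac{\sigma^2(1-\rho^2)N_1}{\Delta_1^{(n)}}+N_2-N_1\bigr)$, which is \eqref{eq:lb-delta1}. You also correctly identify that substituting the trivial bound $\Delta_1^{(n)}\leq D_1$ is not enough. But the step you flag as ``the main obstacle'' is exactly where your proposal stops being a proof: the claimed rigidity statement --- that $P/N_1\leq\Gamma$ forces the genie-aided distortion $D_1'$ down to its uncoded value --- is asserted, not derived. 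You suggest it follows from ``conditioning on ${\bf S}_1$'' plus the capacity bound $I({\bf X};{\bf Y}_1)\leq\frac n2\log(1+P/N_1)$, producing an unspecified quadratic constraint; you never exhibit that constraint, never show why it pins $D_1'$, and never show how the specific function $\Gamma(D_1,\sigma^2,\rho)$ emerges from it. Since the entire content of the theorem (including the precise threshold) lives in this step, the gap is essential, not cosmetic.

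The paper closes this gap with a mechanism absent from your proposal. It introduces $\tilde{D}_2^{\ast}(D_1)$, the least distortion with which \emph{Receiver~1} could estimate ${\bf S}_2$ given that it achieves $D_1$ on ${\bf S}_1$, computed from the joint rate--distortion function $R_{S_1,S_2}$ and the capacity of the link to Receiver~1 (Remark~\ref{rmk:alt-d2t}, Proposition~\ref{prp:bivar-AWGN}). Because $c\hat{\bf S}_1$ is itself a valid estimate of ${\bf S}_2$ at Receiver~1, this yields the correlation bound $\frac1n\sum_k\E{\hat S_{1,k}S_{2,k}}\leq\sqrt{(\sigma^2-\delta_1^{(n)})(\sigma^2-\tilde D_2^{\ast})}$ (Lemma~\ref{lm:bc_ub}). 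The genie-aided distortion is then upper-bounded not by a uniqueness argument but by evaluating the explicit estimator $\check S_{1,k}=a_1\hat S_{1,k}+a_2S_{2,k}$ with free parameters $a_1,a_2$ of equal sign, giving $\Delta_1^{(n)}\leq\eta(\delta_1^{(n)},a_1,a_2)$ and hence $D_2\geq\Psi(D_1,a_1,a_2)$ (Lemma~\ref{lm:prf-main-lm}, after a reduction to boundary schemes achieving $D_1$ with equality, which itself needs the strict monotonicity of $D_2^{\ast}(\cdot)$ via convexity of $\mathscr D$ and Proposition~\ref{prp:d1star-of-d2min}). The threshold $\Gamma$ appears only at the last stage: the optimizing choice $(a_1^{\ast},a_2^{\ast})$ in \eqref{eq:choice-a1*}--\eqref{eq:choice-a2*} makes $\Psi$ coincide with $D_2^{\textnormal u}$, and the condition $P/N_1\leq\Gamma$ is precisely what guarantees $a_2^{\ast}\geq 0$ so that the pair is admissible (equal sign). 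None of these ingredients --- $\tilde D_2^{\ast}$, the correlation bound, the auxiliary estimator, or the sign condition generating $\Gamma$ --- is present in your sketch, so as written the converse is not established.
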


\begin{proof}
  See Appendix~\ref{sec:prf-thm}.
\end{proof}
For $0 < D_1 < \sigma^2 (1-\rho^2)$ the threshold function satisfies
$\Gamma \geq 2\rho/(1-\rho)$ where equality is satisfied for $D_1 =
\sigma^2 (1-\rho)$. Thus a weaker, but simpler, form of Theorem
\ref{thm:bc_main} is
\begin{cor}\label{cor:bc_sym}
  If
  \begin{equation}
    \frac{P}{N_1} \leq \frac{2\rho}{1-\rho},
  \end{equation}
  then any $(D_1,D_2) \in \mathscr{D}$ is achievable by the uncoded
  scheme, i.e.~for every $(D_1,D_2) \in \mathscr{D}$ there exist some
  $\alpha^{\ast},\beta^{\ast} \geq 0$ such that
  \begin{displaymath}
    D_1^{\textnormal{u}}(\alpha^{\ast},\beta^{\ast}) \leq D_1 \qquad
    \text{and} \qquad D_2^{\textnormal{u}}(\alpha^{\ast},\beta^{\ast})
    \leq D_2.
  \end{displaymath}
\end{cor}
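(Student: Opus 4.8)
The plan is to derive this corollary as an immediate consequence of Theorem~\ref{thm:bc_main}. Concretely, I would show that the hypothesis $P/N_1 \leq 2\rho/(1-\rho)$ guarantees that condition~\eqref{eq:snr-threshold} is met for every $(D_1,D_2)\in\mathscr{D}$, and then invoke Theorem~\ref{thm:bc_main} to produce the required $\alpha^{\ast},\beta^{\ast}\geq 0$ with $D_1^{\textnormal{u}}(\alpha^{\ast},\beta^{\ast})\leq D_1$ and $D_2^{\textnormal{u}}(\alpha^{\ast},\beta^{\ast})\leq D_2$. I would split into two cases according to the definition of $\Gamma$. If $D_1\geq\sigma^2(1-\rho^2)$ (or, for that matter, $D_1\le 0$, which cannot occur since $(D_1,D_2)\in\mathscr{D}$ forces $D_1\geq D_{1,\textnormal{min}}>0$), then $\Gamma(D_1,\sigma^2,\rho)=+\infty$ and \eqref{eq:snr-threshold} holds trivially. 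If $0<D_1<\sigma^2(1-\rho^2)$, it suffices to establish the inequality $\Gamma(D_1,\sigma^2,\rho)\geq 2\rho/(1-\rho)$, for then $P/N_1\leq 2\rho/(1-\rho)\leq\Gamma(D_1,\sigma^2,\rho)$ and Theorem~\ref{thm:bc_main} applies.

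The only substantive step is thus the inequality $\Gamma(D_1,\sigma^2,\rho)\geq 2\rho/(1-\rho)$ on the interval $0<D_1<\sigma^2(1-\rho^2)$. I would prove it by clearing denominators: on this interval the denominator $D_1\big(\sigma^2(1-\rho^2)-D_1\big)$ of $\Gamma$ is strictly positive, so the bound is equivalent to
\begin{equation*}
  (1-\rho)\big(\sigma^4(1-\rho^2)-2D_1\sigma^2(1-\rho^2)+D_1^2\big)\;-\;2\rho\,D_1\big(\sigma^2(1-\rho^2)-D_1\big)\;\geq\;0.
\end{equation*}
Expanding, collecting the terms in $D_1^2$, in $D_1$, and the constant, and using $\sigma^2(1-\rho^2)=\sigma^2(1-\rho)(1+\rho)$, the left-hand side factors as $(1+\rho)\big(D_1-\sigma^2(1-\rho)\big)^2$, which is manifestly nonnegative and vanishes exactly at $D_1=\sigma^2(1-\rho)$ (recovering the asserted equality case, and noting that $\sigma^2(1-\rho)$ indeed lies in $\big(0,\sigma^2(1-\rho^2)\big)$ for $\rho\in(0,1)$). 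This closes the reduction to Theorem~\ref{thm:bc_main}.

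There is no genuine obstacle here: the corollary is a routine weakening of Theorem~\ref{thm:bc_main}, and its entire content is the elementary fact that the map $D_1\mapsto\Gamma(D_1,\sigma^2,\rho)$, minimized over $\big(0,\sigma^2(1-\rho^2)\big)$, attains the value $2\rho/(1-\rho)$ at $D_1=\sigma^2(1-\rho)$ — a fact the perfect-square identity above makes transparent. All the difficulty is concentrated in Theorem~\ref{thm:bc_main} itself, whose proof is deferred to the appendix.
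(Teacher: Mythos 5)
Your proposal is correct and takes essentially the same route as the paper, which also obtains the corollary directly from Theorem~\ref{thm:bc_main} via the observation that $\Gamma(D_1,\sigma^2,\rho)\geq 2\rho/(1-\rho)$ on $0<D_1<\sigma^2(1-\rho^2)$ (with equality at $D_1=\sigma^2(1-\rho)$) and $\Gamma=+\infty$ otherwise. Your perfect-square factorization $(1+\rho)\bigl(D_1-\sigma^2(1-\rho)\bigr)^2\geq 0$ is a correct verification of the inequality that the paper states without detail.
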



\section{Summary}

We studied the transmission of a memoryless bivariate Gaussian source
over an average-power-constrained one-to-two Gaussian broadcast
channel. In this problem, the transmitter of the channel observes the
source and describes it to the two receivers by means of an
average-power-constrained signal. Each receiver observes the
transmitted signal corrupted by a different additive white Gaussian
noise and wishes to estimate one of the source components. That is,
Receiver~1 wishes to estimate the first source component and
Receiver~2 wishes to estimate the second source component. Our
interest was in the pairs of expected squared-error distortions that
are simultaneously achievable at the two receivers.

For this problem, we presented the optimality of an uncoded
transmission scheme for all SNRs below a certain threshold (see
Theorem~\ref{thm:bc_main}). A weaker form of this result (see
Corollary~\ref{cor:bc_sym}) is that if the SNR on the link with the
weaker additive noise satisfies
\begin{equation*}
 \frac{P}{N_1} \leq \frac{2\rho}{1-\rho},
\end{equation*}
then every achievable distortion pair is achieved by the presented
uncoded transmission scheme.

\appendix

\section{Proof of Remark~\ref{rmk:prpty-D} and Proposition~\ref{prp:d1star-of-d2min}}

\subsection{Proof of Remark \ref{rmk:prpty-D}}\label{apx:prf-D-closed}
The convexity of $\mathscr{D}$ follows by a time-sharing
argument. This technique is demonstrated in \cite[Proof of
Lemma~13.4.1, pp.~349]{cover-thomas91}.

We now prove that $\mathscr{D}$ is closed. To this end, let $\{
_{\nu}D_1 \}_{\nu=1}^{\infty}$, $\{ _{\nu}D_2 \}_{\nu=1}^{\infty}$ be
sequences satsifying $ (_{\nu}D_1, _{\nu}D_2 ) \in \mathscr{D}$, for
all $\nu \in \Naturals^+$, and satisfying
\begin{IEEEeqnarray*}{rCl}
  \lim_{\nu \rightarrow \infty} {}_{\nu}D_i & = & D_i \qquad i
  \in \{1,2\},
\end{IEEEeqnarray*}
for some $D_1, D_2 \in \Reals$. To show that $\mathscr{D}$ is closed
we need to show that $(D_1,D_2) \in \mathscr{D}$. We construct a
sequence of schemes achieving $(D_1, D_2)$ as follows. Since $(
_{\nu}D_1, _{\nu}D_2) \in \mathscr{D}$, it follows that there exists a
monotonically increasing sequence of positive integers $\{ n_{\nu}
\}_{\nu=1}^{\infty}$ such that for all $n \geq n_{\nu}$ there exists a
scheme $\big( f_{\nu}^{(n)}, \phi_{1,\nu}^{(n)}, \phi_{2,\nu}^{(n)}
\big)$ satisfying
\begin{IEEEeqnarray*}{rCl}
  \frac{1}{n} \sum_{k=1}^n \E{(S_{1,k} - \hat{S}_{1,k})^2} & < &
  {}_{\nu}D_1 + \frac{1}{\nu},\\
  \frac{1}{n} \sum_{k=1}^n \E{(S_{2,k} - \hat{S}_{2,k})^2} & < &
  {}_{\nu}D_2 + \frac{1}{\nu}. 
\end{IEEEeqnarray*}
Since $n_{\nu}$ is increasing in $\nu$, we now choose our
sequence of schemes to be $\{ f_{\nu}^{(n)} \}$, $\big\{
\phi_{1,\nu}^{(n)} \big\}$, $\big\{ \phi_{2,\nu}^{(n)} \big\}$ for all
$n \in [ n_\nu, n_{\nu+1})$ and $\nu \in \Naturals^{+}$. This sequence
of schemes satisfies
\begin{IEEEeqnarray}{rCl}
  \varlimsup_{n \rightarrow \infty} \frac{1}{n} \sum_{k=1}^n
  \E{(S_{1,k}
    - \hat{S}_{1,k})^2} & \leq & D_1, \label{eq:lim-D1-star}\\
  \varlimsup_{n \rightarrow \infty} \frac{1}{n} \sum_{k=1}^n
  \E{(S_{2,k} - \hat{S}_{2,k})^2} & \leq & D_2,
\end{IEEEeqnarray}
so, by Definition~\ref{df:achievability}, the pair $(D_1,D_2)$ is
achievable, i.e., in $\mathscr{D}$. \hfill \qed

\subsection{Proof of Proposition
  \ref{prp:d1star-of-d2min}}\label{apx:prf-d1star-of-d2min}
To prove Proposition~\ref{prp:d1star-of-d2min} we derive a lower bound
on $D_1^{\ast}(D_{2,\textnormal{min}})$ and then show that this lower
bound is achieved by the uncoded scheme. To this end, let
\begin{IEEEeqnarray}{rCl}\label{eq:def-W1}
{\bf W}_1 \eqdef {\bf S}_1 - \rho {\bf S}_2,
\end{IEEEeqnarray}
and note that ${\bf W}_1$ is independent of ${\bf S}_2$. The key to
the lower bound is that for any sequence of schemes achieving
$D_{2,\textnormal{min}}$, the amount of information that ${\bf Y}_1$
can contain about ${\bf W}_1$ must vanish as $n \rightarrow
\infty$. This will be stated more precisely later on.

Let $\{f^{(n)}, \phi_1^{(n)}, \phi_2^{(n)}\}$ be some sequence of
coding schemes achieving the distortion $D_{2,\textnormal{min}}$ in
the sense that
\begin{IEEEeqnarray}{rCl}\label{eq:def-achv-D2min}
  \varlimsup_{n \rightarrow \infty} \delta_2^{(n)} = D_{2,\textnormal{min}},
\end{IEEEeqnarray}
where $\delta_1^{(n)}$ and $\delta_2^{(n)}$ are as in
\eqref{eq:delta-i-n}. Let ${\bf X}$ be the channel input associated
with this coding scheme, and let ${\bf Y}_1$ be the resulting
$n$-tuple received by Receiver~1.

We now lower bound $\delta_1^{(n)}$ using the
relation ${\bf S}_1 = {\bf W}_1 + \rho {\bf S}_2$. From this relation
it follows that the optimal estimator, for $1 \leq k \leq n$, is
\begin{IEEEeqnarray*}{rCl}
  \E{S_{1,k} | {\bf Y}_1} & = & \E{W_{1,k} + \rho S_{2,k}|{\bf Y}_1}\\
  & = & \E{W_{1,k} |{\bf Y}_1} + \rho \E{S_{2,k}|{\bf Y}_1}.
\end{IEEEeqnarray*}
Since $\phi_1^{(n)}$ cannot outperform the optimal estimator,
\begin{IEEEeqnarray}{rCl}\label{eq:prf-D1=-lb-D1*}
  \delta_1^{(n)} & \geq & \frac{1}{n} \sum_{k=1}^n \E{\left( S_{1,k} -
      \E{S_{1,k}|{\bf Y}_1} \right)^2}\nonumber\\
  & = & \frac{1}{n} \sum_{k=1}^n \E{(W_{1,k} + \rho S_{2,k} - \E{W_{1,k}
      |{\bf Y}_1} - \rho \E{S_{2,k}|{\bf Y}_1})^2}\nonumber\\
  & = & \rho^2 \frac{1}{n} \sum_{k=1}^n \E{(S_{2,k} - \E{S_{2,k}|{\bf
        Y}_1})^2} + \frac{1}{n} \sum_{k=1}^n \E{(W_{1,k} -
    \E{W_{1,k}|{\bf Y}_1})^2}\nonumber\\
  & & {} + 2 \rho \frac{1}{n} \sum_{k=1}^n\E{(W_{1,k} - \E{W_{1,k}|{\bf
        Y}_1})(S_{2,k} - \E{S_{2,k}|{\bf Y}_1})}.
\end{IEEEeqnarray}
We now lower bound the three terms on the RHS of
\eqref{eq:prf-D1=-lb-D1*}. For the first term we have
\begin{IEEEeqnarray}{rCl}\label{eq:ubD2t-D2min}
  \frac{1}{n} \sum_{k=1}^n \E{(S_{2,k} - \E{S_{2,k}|{\bf
        Y}_1})^2} & \geq & \sigma^2 2^{-\frac{2}{n} I({\bf S}_2 ;{\bf Y}_1)}\nonumber\\
  & \geq & \sigma^2 2^{-\frac{2}{n} I({\bf X} ;{\bf Y}_1)}\nonumber\\[2mm]
  & \geq & \sigma^2 \frac{N_1}{P+N_1},
\end{IEEEeqnarray}
where the first inequality follows by rate-distortion theory, the
second inequality by the data processig inequality, and the third
because the IID Gaussian input maximizes the mutual information.

To bound the second term in \eqref{eq:prf-D1=-lb-D1*} we use the
following lemma.
\begin{lm}\label{lm:ub-Iw1y1}
  For any sequence of schemes achieving $D_{2,\textnormal{min}}$ in
  the sense of \eqref{eq:def-achv-D2min} and any $\epsilon > 0$ there
  exists an integer $n_{\epsilon}$ such that for all $n \geq
  n_{\epsilon}$
  \begin{IEEEeqnarray}{rCl}\label{eq:ub-Iw1y1}
    I({\bf W}_1 ;{\bf Y}_1) & \leq & \frac{n}{2} \log_2 \left(
      \frac{\epsilon + N_1}{N_1} \right),
  \end{IEEEeqnarray}
  where ${\bf W}_1$ is defined in \eqref{eq:def-W1} and ${\bf Y}_1$ is
  the $n$-tuple received by Receiver~1 when this scheme is used.
\end{lm}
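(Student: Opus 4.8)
The plan is to bound $I(\mathbf{W}_1;\mathbf{Y}_1)$ by the conditional quantity $I(\mathbf{W}_1;\mathbf{Y}_1\mid\mathbf{S}_2)$, to express the latter exactly as $h(\mathbf{Y}_1\mid\mathbf{S}_2)-\frac{n}{2}\log_2(2\pi e N_1)$, and then to show that achieving $D_{2,\textnormal{min}}$ forces $h(\mathbf{Y}_1\mid\mathbf{S}_2)$ down to essentially the entropy of the noise $\mathbf{Z}_1$. \emph{Step~1 (condition on $\mathbf{S}_2$).} Since $\mathbf{W}_1=\mathbf{S}_1-\rho\mathbf{S}_2$ is independent of $\mathbf{S}_2$, expanding $I(\mathbf{W}_1;\mathbf{Y}_1,\mathbf{S}_2)$ in the two orders gives $I(\mathbf{W}_1;\mathbf{Y}_1)\le I(\mathbf{W}_1;\mathbf{Y}_1,\mathbf{S}_2)=I(\mathbf{W}_1;\mathbf{S}_2)+I(\mathbf{W}_1;\mathbf{Y}_1\mid\mathbf{S}_2)=I(\mathbf{W}_1;\mathbf{Y}_1\mid\mathbf{S}_2)$. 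Because the encoder $f^{(n)}$ is deterministic, $(\mathbf{W}_1,\mathbf{S}_2)$ determines $\mathbf{S}_1=\mathbf{W}_1+\rho\mathbf{S}_2$ and hence $\mathbf{X}$, while $\mathbf{Y}_1=\mathbf{X}+\mathbf{Z}_1$ with $\mathbf{Z}_1$ independent of $(\mathbf{S}_1,\mathbf{S}_2)$; therefore $h(\mathbf{Y}_1\mid\mathbf{S}_2,\mathbf{W}_1)=h(\mathbf{Z}_1)=\frac{n}{2}\log_2(2\pi e N_1)$ and $I(\mathbf{W}_1;\mathbf{Y}_1\mid\mathbf{S}_2)=h(\mathbf{Y}_1\mid\mathbf{S}_2)-\frac{n}{2}\log_2(2\pi e N_1)$. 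So it suffices to prove that $h(\mathbf{Y}_1\mid\mathbf{S}_2)\le\frac{n}{2}\log_2\!\big(2\pi e(N_1+\epsilon)\big)$ for all $n$ large.

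\emph{Step~2 (exploit degradedness via the entropy power inequality).} Since $N_1<N_2$, the broadcast channel is stochastically degraded, and all the quantities entering the argument depend on the $\mathbf{Y}_2$-branch only through the marginal transition law $p_{\mathbf{Y}_2\mid\mathbf{X}}$; hence we may assume, without loss of generality, that $\mathbf{Y}_2=\mathbf{Y}_1+\tilde{\mathbf{Z}}$ with $\tilde{\mathbf{Z}}\sim\Normal{0}{N_2-N_1}$ (componentwise, IID) independent of $(\mathbf{S}_1,\mathbf{S}_2,\mathbf{Z}_1)$. Given $\mathbf{S}_2$, the vectors $\mathbf{Y}_1$ and $\tilde{\mathbf{Z}}$ are then conditionally independent, so the conditional entropy power inequality gives
\[
2^{\frac{2}{n}h(\mathbf{Y}_2\mid\mathbf{S}_2)}\ \ge\ 2^{\frac{2}{n}h(\mathbf{Y}_1\mid\mathbf{S}_2)}+2\pi e(N_2-N_1).
\]

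\emph{Step~3 (bound $h(\mathbf{Y}_2\mid\mathbf{S}_2)$ using $D_{2,\textnormal{min}}$).} Write $h(\mathbf{Y}_2\mid\mathbf{S}_2)=h(\mathbf{Y}_2)-I(\mathbf{S}_2;\mathbf{Y}_2)$. The power constraint, Hadamard's inequality and concavity of the logarithm give $h(\mathbf{Y}_2)\le\frac{n}{2}\log_2(2\pi e(P+N_2))$. On the other hand, since $\hat{\mathbf{S}}_2$ is a function of $\mathbf{Y}_2$, the data-processing inequality, single-letterization, the scalar Gaussian rate-distortion converse and convexity give $\frac{1}{n}I(\mathbf{S}_2;\mathbf{Y}_2)\ge\frac{1}{2}\log_2\!\big(\sigma^2/\delta_2^{(n)}\big)$ once $\delta_2^{(n)}<\sigma^2$ (eventually true). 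Combining, $2^{\frac{2}{n}h(\mathbf{Y}_2\mid\mathbf{S}_2)}\le 2\pi e(P+N_2)\,\delta_2^{(n)}/\sigma^2$, so by Step~2,
\[
2^{\frac{2}{n}h(\mathbf{Y}_1\mid\mathbf{S}_2)}\ \le\ 2\pi e\!\left[(P+N_2)\frac{\delta_2^{(n)}}{\sigma^2}-(N_2-N_1)\right].
\]
Since $D_{2,\textnormal{min}}=\sigma^2 N_2/(N_2+P)$ satisfies $(P+N_2)D_{2,\textnormal{min}}/\sigma^2=N_2$, and $\varlimsup_{n}\delta_2^{(n)}=D_{2,\textnormal{min}}$, for any $\epsilon>0$ there is $n_\epsilon$ with $\delta_2^{(n)}\le D_{2,\textnormal{min}}+\epsilon\sigma^2/(P+N_2)$ for all $n\ge n_\epsilon$, whence the bracket is at most $N_1+\epsilon$. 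Thus $h(\mathbf{Y}_1\mid\mathbf{S}_2)\le\frac{n}{2}\log_2(2\pi e(N_1+\epsilon))$, and plugging this into Step~1 yields $I(\mathbf{W}_1;\mathbf{Y}_1)\le\frac{n}{2}\log_2\!\big((N_1+\epsilon)/N_1\big)$.

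The step I expect to be the crux is Step~2: the naive data-processing bound $I(\mathbf{S}_2;\mathbf{Y}_1)\ge I(\mathbf{S}_2;\mathbf{Y}_2)$ loses precisely the gap between the two link capacities and is too weak to conclude; it is the entropy power inequality—together with the reduction to a physically degraded channel—that captures that $\mathbf{Y}_1$ reveals almost everything about $\mathbf{S}_2$ when the scheme is near-optimal for Receiver~2, thereby collapsing $h(\mathbf{Y}_1\mid\mathbf{S}_2)$ to the noise entropy. The remaining ingredients (the two converse bounds and the final limit) are routine.
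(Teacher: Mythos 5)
Your proof is correct and follows essentially the same route as the paper: condition on $\mathbf{S}_2$, bound $h(\mathbf{Y}_2\mid\mathbf{S}_2)$ via the rate-distortion converse at Receiver~2, transfer this to $h(\mathbf{Y}_1\mid\mathbf{S}_2)$ through the conditional entropy power inequality (using the degraded representation $\mathbf{Y}_2=\mathbf{Y}_1+\tilde{\mathbf{Z}}$), and then let $\delta_2^{(n)}$ approach $D_{2,\textnormal{min}}$. The only cosmetic difference is that the paper packages the middle steps as a separate lemma bounding $I(\mathbf{S}_1;\mathbf{Y}_1\mid\mathbf{S}_2)$ and then uses $I(\mathbf{W}_1;\mathbf{Y}_1)\le I(\mathbf{S}_1;\mathbf{Y}_1\mid\mathbf{S}_2)$, whereas you work with $\mathbf{W}_1$ throughout.
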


\begin{proof}
  See Appendix~\ref{apx:prf-lm-Iw1y1}.
\end{proof}

\noindent
We now have
\begin{IEEEeqnarray}{rCl}
  \frac{1}{n} \sum_{k=1}^n \E{(W_{1,k} - \E{W_{1,k}|{\bf Y}_1})^2} &
  \geq & \sigma^2(1-\rho^2) 2^{-\frac{2}{n} I({\bf W}_1 ;{\bf Y}_1)}
  \nonumber\\
  & \geq & \sigma^2(1-\rho^2) \frac{N_1}{\epsilon + N_1} \qquad \forall n
  \geq n_{\epsilon},\label{eq:Dw}
\end{IEEEeqnarray}
where the first inequality follows from rate-distortion theory
(because $W_{1,k}$ is $\Normal{0}{\sigma^2(1-\rho^2)}$), and the
second inequality follows by Lemma~\ref{lm:ub-Iw1y1}.

The third term in \eqref{eq:prf-D1=-lb-D1*} is lower bounded in the
following lemma.
\begin{lm}\label{lm:lb-trm3}
  \begin{IEEEeqnarray}{rCl}\label{eq:cor-err-W1-S2}
    \frac{1}{n} \sum_{k=1}^n\E{(W_{1,k} - \E{W_{1,k}|{\bf
          Y}_1})(S_{2,k} - \E{S_{2,k}|{\bf Y}_1})} & \geq & -
    \sqrt{\frac{\epsilon}{N_1 + \epsilon}} \cdot \sigma.
  \end{IEEEeqnarray}
\end{lm}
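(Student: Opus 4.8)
The plan is to control the cross-term by the Cauchy--Schwarz inequality applied to the two estimation-error sequences, and then to bound one of the two resulting factors using Lemma~\ref{lm:ub-Iw1y1}. Concretely, write $\tilde W_{1,k} \eqdef W_{1,k} - \E{W_{1,k}|{\bf Y}_1}$ and $\tilde S_{2,k} \eqdef S_{2,k} - \E{S_{2,k}|{\bf Y}_1}$. By Cauchy--Schwarz (first on the expectation, then on the sum over $k$, or in one step on the inner product on $\Reals^n$-valued random vectors),
\begin{IEEEeqnarray*}{rCl}
  \left| \frac{1}{n} \sum_{k=1}^n \E{\tilde W_{1,k} \tilde S_{2,k}} \right|
  & \leq & \left( \frac{1}{n} \sum_{k=1}^n \E{\tilde W_{1,k}^2} \right)^{1/2}
           \left( \frac{1}{n} \sum_{k=1}^n \E{\tilde S_{2,k}^2} \right)^{1/2}.
\end{IEEEeqnarray*}
For the second factor, note that $\tilde S_{2,k}$ is an estimation error, so $\E{\tilde S_{2,k}^2} \leq \E{S_{2,k}^2} = \sigma^2$; hence the second factor is at most $\sigma$. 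For the first factor, the $W_{1,k}$ are i.i.d.\ $\Normal{0}{\sigma^2(1-\rho^2)}$, and the conditional-expectation errors satisfy
\begin{IEEEeqnarray*}{rCl}
  \frac{1}{n} \sum_{k=1}^n \E{\tilde W_{1,k}^2}
  & = & \frac{1}{n} \sum_{k=1}^n \E{(W_{1,k} - \E{W_{1,k}|{\bf Y}_1})^2}
    \leq \frac{1}{n} \sum_{k=1}^n \E{(W_{1,k} - \E{W_{1,k}|W_{1,1},\dots,W_{1,k-1},{\bf Y}_1})^2},
\end{IEEEeqnarray*}
which, by the chain rule and the conditional version of the rate--distortion-type bound $\frac1n h({\bf W}_1 | {\bf Y}_1) \geq \frac1n h({\bf W}_1) - \frac1n I({\bf W}_1;{\bf Y}_1)$ together with the Gaussian maximum-entropy inequality, is upper bounded by $\sigma^2(1-\rho^2) 2^{\frac{2}{n} I({\bf W}_1;{\bf Y}_1)}$. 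Applying Lemma~\ref{lm:ub-Iw1y1}, this is at most $\sigma^2(1-\rho^2)\,\frac{\epsilon+N_1}{N_1}$ for all $n \geq n_\epsilon$.

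Combining, $\left| \frac{1}{n} \sum_{k=1}^n \E{\tilde W_{1,k} \tilde S_{2,k}} \right| \leq \sigma \sqrt{1-\rho^2}\,\sqrt{(\epsilon+N_1)/N_1}\cdot\sigma$, which in particular gives the claimed one-sided bound; but this is not quite the stated inequality, whose right-hand side is $-\sqrt{\epsilon/(N_1+\epsilon)}\cdot\sigma$ and does not blow up. The right way to get the sharper form is to exploit orthogonality: the MMSE error $\tilde S_{2,k}$ is orthogonal to every function of ${\bf Y}_1$, so $\E{\E{W_{1,k}|{\bf Y}_1}\,\tilde S_{2,k}} = 0$ and hence $\E{\tilde W_{1,k}\tilde S_{2,k}} = \E{W_{1,k}\tilde S_{2,k}} = \E{W_{1,k}(S_{2,k} - \E{S_{2,k}|{\bf Y}_1})}$. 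Since $W_{1,k} = S_{1,k} - \rho S_{2,k}$ is independent of ${\bf S}_2$ and of the source, but \emph{not} of ${\bf Y}_1$ in general, we write $\E{W_{1,k} S_{2,k}} = 0$ and are left with $-\E{W_{1,k}\E{S_{2,k}|{\bf Y}_1}} = -\E{\E{W_{1,k}|{\bf Y}_1}\E{S_{2,k}|{\bf Y}_1}}$ (using the tower property). Now apply Cauchy--Schwarz to \emph{this} product: the factor $\big(\frac1n\sum_k \E{\E{W_{1,k}|{\bf Y}_1}^2}\big)^{1/2}$ is the energy of the MMSE \emph{estimate} of $W_{1,k}$, which equals $\sigma^2(1-\rho^2) - \frac1n\sum_k\E{\tilde W_{1,k}^2} \leq \sigma^2(1-\rho^2) - \sigma^2(1-\rho^2)\frac{N_1}{\epsilon+N_1} = \sigma^2(1-\rho^2)\frac{\epsilon}{\epsilon+N_1}$ by \eqref{eq:Dw}, while the factor $\big(\frac1n\sum_k\E{\E{S_{2,k}|{\bf Y}_1}^2}\big)^{1/2} \leq \sigma$. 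This yields $\frac1n\sum_k\E{\tilde W_{1,k}\tilde S_{2,k}} \geq -\sigma\sqrt{1-\rho^2}\sqrt{\epsilon/(\epsilon+N_1)}\cdot\sigma \geq -\sqrt{\epsilon/(N_1+\epsilon)}\cdot\sigma$ for $n\geq n_\epsilon$, as claimed (the last step using $\sigma^2(1-\rho^2)\leq\sigma^2$, though one should double-check whether the intended right-hand side carries a $\sigma^2$ or a $\sigma$ and a $\sqrt{1-\rho^2}$).

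The main obstacle is exactly the passage in the previous paragraph: a naive Cauchy--Schwarz on the two error sequences gives a bound that degrades like $\sqrt{(\epsilon+N_1)/N_1}$ rather than vanishing with $\epsilon$, so it is essential to first use the MMSE orthogonality principle to replace the cross-error term by a product of two MMSE \emph{estimates}, one of which has vanishingly small energy (this is where \eqref{eq:Dw}, i.e.\ Lemma~\ref{lm:ub-Iw1y1}, does the real work). A secondary technical point is justifying the conditional rate--distortion bound used to get $\frac1n\sum_k\E{\tilde W_{1,k}^2}\leq\sigma^2(1-\rho^2)2^{\frac2n I({\bf W}_1;{\bf Y}_1)}$ from the per-letter MMSE — but this is the same argument already invoked for \eqref{eq:Dw} in the main text, so it can simply be cited.
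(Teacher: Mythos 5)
Your second argument is precisely the paper's proof: use the independence of $W_{1,k}$ and $S_{2,k}$ together with the orthogonality of MMSE errors to functions of ${\bf Y}_1$ to reduce the cross-term to (minus) a product involving the estimate $\E{W_{1,k}|{\bf Y}_1}$, apply Cauchy--Schwarz, and bound the estimate's energy via \eqref{eq:Dw} and the orthogonality principle --- so, after discarding your initial naive Cauchy--Schwarz detour, this is essentially the same approach. Your parenthetical caveat about whether the right-hand side should carry a $\sigma\sqrt{1-\rho^2}$ factor is apt: the paper's own rearrangement of \eqref{eq:bound-error-W1} drops the factor $\sigma^2(1-\rho^2)$, a harmless slip since the bound is only used in the limit $\epsilon \to 0$.
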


\begin{proof}
  See Appendix~\ref{apx:prf-lm-trm3}.
\end{proof}

Combining the bounds in
\eqref{eq:ubD2t-D2min}, \eqref{eq:Dw} and \eqref{eq:cor-err-W1-S2}
with the bound in \eqref{eq:prf-D1=-lb-D1*} gives
\begin{IEEEeqnarray*}{rCl}
  \delta_1^{(n)} & \geq & \sigma^2 (1-\rho^2) \frac{N_1}{N_1 + \epsilon}
  - 2 \rho \sqrt{\frac{\epsilon}{N_1 + \epsilon}} \cdot \sigma + \sigma^2
  \rho^2 \frac{N_1}{N_1+P}.
\end{IEEEeqnarray*}
Taking the limit inferior as $n \rightarrow \infty$ (with $\epsilon > 0$ held
fixed), and then letting $\epsilon$ tend to zero, gives
\begin{IEEEeqnarray*}{rCl}
  \varliminf_{n \rightarrow \infty} & \geq & \sigma^2 \frac{N_1 +
    P(1-\rho^2)}{N_1+P},
\end{IEEEeqnarray*}
and hence,
\begin{IEEEeqnarray}{rCl}\label{eq:lb-d1star-d2min}
  D_1^{\ast}(D_{2,\textnormal{min}}) & \geq & \sigma^2 \frac{N_1 +
    P(1-\rho^2)}{N_1+P}.
\end{IEEEeqnarray}
Since the RHS of \eqref{eq:lb-d1star-d2min} is achieved by the
uncoded scheme with $\alpha=0$, $\beta=1$, it follows that
\eqref{eq:lb-d1star-d2min} must hold with equality, i.e., that
\begin{IEEEeqnarray}{rCl}
  D_1^{\ast}(D_{2,\textnormal{min}}) = \sigma^2 \frac{N_1 +
    P(1-\rho^2)}{N_1+P}.
\end{IEEEeqnarray}

\subsubsection{Proof of Lemma~\ref{lm:ub-Iw1y1}}\label{apx:prf-lm-Iw1y1}

The key element to the proof of Lemma~\ref{lm:ub-Iw1y1} is the
following lemma.

\begin{lm}\label{lm:bc_cond-mut-info}
  Any scheme resulting in the distortion $\delta_2^{(n)}$ at
  Receiver~2, must produce a ${\bf Y}_1$ satisfying
  \begin{equation}\label{eq:bc_cond-mut-info}
    I({\bf S}_1; {\bf Y}_1 |{\bf S}_2) \leq \frac{n}{2} \log_2 \left(
      \frac{(P+N_2)\delta_2^{(n)}/\sigma^2-N_2+N_1}{N_1} \right).
  \end{equation}
\end{lm}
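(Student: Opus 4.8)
The plan is to exploit the fact that, since $N_1<N_2$, Receiver~2's observation may be regarded as a physically degraded version of Receiver~1's. Fix an arbitrary scheme with channel input ${\bf X}=f^{(n)}({\bf S}_1,{\bf S}_2)$ and induced outputs ${\bf Y}_1,{\bf Y}_2$, and resulting Receiver-2 distortion $\delta_2^{(n)}$. Since $\delta_2^{(n)}$ depends only on the joint law of $({\bf S}_2,{\bf Y}_2)$, and the quantity $I({\bf S}_1;{\bf Y}_1|{\bf S}_2)$ only on the joint law of $({\bf S}_1,{\bf S}_2,{\bf Y}_1)$, I am free to couple the two noise sequences as I please; I would write ${\bf Y}_2={\bf Y}_1+{\bf Z}'$, where ${\bf Z}'=(Z_1',\ldots,Z_n')$ has components that are IID $\Normal{0}{N_2-N_1}$ and is independent of $({\bf S}_1,{\bf S}_2,{\bf X},{\bf Z}_1)$. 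Then $({\bf S}_2,{\bf Y}_2)$ still has the prescribed law, so $\delta_2^{(n)}$ is unchanged.

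Next I would strip off the part of the conditional mutual information that is the same for every scheme. Because ${\bf X}=f^{(n)}({\bf S}_1,{\bf S}_2)$ is a deterministic function of $({\bf S}_1,{\bf S}_2)$ and ${\bf Z}_1$ is independent of the source, $h({\bf Y}_1\,|\,{\bf S}_1,{\bf S}_2)=h({\bf Z}_1)=\frac n2\log_2(2\pi eN_1)$, so that $I({\bf S}_1;{\bf Y}_1|{\bf S}_2)=h({\bf Y}_1|{\bf S}_2)-\frac n2\log_2(2\pi eN_1)$. It therefore suffices to prove $2^{(2/n)h({\bf Y}_1|{\bf S}_2)}\le 2\pi e\bigl((P+N_2)\delta_2^{(n)}/\sigma^2-N_2+N_1\bigr)$. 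For this I would apply the conditional entropy-power inequality to ${\bf Y}_2={\bf Y}_1+{\bf Z}'$, conditioned on ${\bf S}_2$; this is legitimate because ${\bf Z}'$ is independent of $({\bf Y}_1,{\bf S}_2)$ and hence conditionally independent of ${\bf Y}_1$ given ${\bf S}_2$. It yields $2^{(2/n)h({\bf Y}_2|{\bf S}_2)}\ge 2^{(2/n)h({\bf Y}_1|{\bf S}_2)}+2\pi e(N_2-N_1)$, i.e.\ $2^{(2/n)h({\bf Y}_1|{\bf S}_2)}\le 2^{(2/n)h({\bf Y}_2|{\bf S}_2)}-2\pi e(N_2-N_1)$.

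It then remains to bound $h({\bf Y}_2|{\bf S}_2)=h({\bf Y}_2)-I({\bf S}_2;{\bf Y}_2)$. The power constraint \eqref{eq:power-constr} gives $\frac1n\sum_{k=1}^n\E{Y_{2,k}^2}\le P+N_2$ and hence $h({\bf Y}_2)\le\frac n2\log_2\!\left(2\pi e(P+N_2)\right)$, while rate-distortion theory---exactly as in the first step of \eqref{eq:ubD2t-D2min}, using that Receiver~2 reconstructs the IID $\Normal{0}{\sigma^2}$ source ${\bf S}_2$ from ${\bf Y}_2$ with average squared-error $\delta_2^{(n)}$---gives $2^{-(2/n)I({\bf S}_2;{\bf Y}_2)}\le\delta_2^{(n)}/\sigma^2$. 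Combining these two estimates, $2^{(2/n)h({\bf Y}_2|{\bf S}_2)}\le 2\pi e(P+N_2)\delta_2^{(n)}/\sigma^2$; substituting into the entropy-power bound above and recalling $I({\bf S}_1;{\bf Y}_1|{\bf S}_2)=h({\bf Y}_1|{\bf S}_2)-\frac n2\log_2(2\pi eN_1)$ produces exactly \eqref{eq:bc_cond-mut-info}.

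The only non-routine ingredient is the entropy-power inequality, and the degraded coupling chosen at the outset is precisely what is needed to make ${\bf Y}_1$ and ${\bf Z}'$ conditionally independent given ${\bf S}_2$ so that it applies in its conditional form. It is also worth noting, so that the right-hand side of \eqref{eq:bc_cond-mut-info} is well defined, that any scheme obeying the power constraint has $\delta_2^{(n)}\ge\sigma^2N_2/(P+N_2)$ (by the single-user converse, i.e.\ $\delta_2^{(n)}\ge\sigma^2 2^{-(2/n)I({\bf X};{\bf Y}_2)}\ge\sigma^2 N_2/(P+N_2)$), whence $(P+N_2)\delta_2^{(n)}/\sigma^2-N_2+N_1\ge N_1>0$.
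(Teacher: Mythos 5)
Your proof is correct and follows essentially the same route as the paper's: write $I({\bf S}_1;{\bf Y}_1|{\bf S}_2)=h({\bf Y}_1|{\bf S}_2)-\tfrac n2\log_2(2\pi eN_1)$, bound $h({\bf Y}_2|{\bf S}_2)$ through rate--distortion theory and the power constraint, and transfer this to $h({\bf Y}_1|{\bf S}_2)$ via the conditional entropy-power inequality applied to the degraded coupling ${\bf Y}_2={\bf Y}_1+{\bf Z}'$ with ${\bf Z}'$ IID $\Normal{0}{N_2-N_1}$ independent of $({\bf Y}_1,{\bf S}_2)$. Your explicit justification of the coupling and of the positivity of the right-hand side are harmless refinements of the paper's argument.
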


\begin{proof}
  We first notice that
  \begin{IEEEeqnarray}{rCl}
    I({\bf S}_1;{\bf Y}_1 | {\bf S}_2) & = & h({\bf Y}_1|{\bf S}_2) - h({\bf
      Y}_1|{\bf S}_1, {\bf S}_2)\nonumber\\
    & = & h({\bf Y}_1|{\bf S}_2) - h({\bf Z}_1)\nonumber\\
    & = & h({\bf Y}_1|{\bf S}_2) - \frac{n}{2} \log_2 \left( 2\pi e N_1
    \right).\label{eq:prf-lm-mut-info-basic}
  \end{IEEEeqnarray}
  To upper bound $I({\bf S}_1;{\bf Y}_1|{\bf S}_2)$ it thus suffices
  to upper bound $h({\bf Y}_1|{\bf S}_2)$. To this end, we first upper
  bound $h({\bf Y}_2|{\bf S}_2)$ by means of rate-distortion theory,
  and then deduce an upper bound on $h({\bf Y}_1|{\bf S}_2)$ by means
  of a conditional version of the entropy power inequality.

  We denote the rate-distortion function for ${\bf S}_2$ by
  $R_{S_2}(\cdot)$ so that
  \begin{IEEEeqnarray*}{rCl}
    R_{S_2} (\Delta_2) & = & \frac{1}{2} \log_2 \left(
      \frac{\sigma^2}{\Delta_2} \right),
  \end{IEEEeqnarray*}
  for any $\Delta_2 > 0$. Hence,
  \begin{IEEEeqnarray}{rCl}
    \frac{n}{2} \log_2 \left( \frac{\sigma^2}{\delta_2^{(n)}} \right) &
    = & n R_{S_2}(\delta_2^{(n)}) \nonumber\\
    & \leq & I({\bf S}_2;\hat{\bf S}_2) \nonumber\\
    & \leq & I({\bf S}_2;{\bf Y}_2) \nonumber\\
    & = & h({\bf Y}_2) - h({\bf Y}_2 | {\bf S}_2) \nonumber\\
    & \leq & \frac{n}{2} \log_2 \left( 2\pi e (P+N_2)\right) - h({\bf Y}_2 |
    {\bf S}_2).\label{eq:lm3-step1} 
  \end{IEEEeqnarray}
  Rearranging \eqref{eq:lm3-step1} gives
  \begin{IEEEeqnarray}{rCl}
    h({\bf Y}_2 | {\bf S}_2) & \leq & \frac{n}{2} \log_2
    \left( 2\pi e (P+N_2)\right) - \frac{n}{2} \log_2 \left(
      \frac{\sigma^2}{\delta_2^{(n)}}
    \right)\nonumber\\
    & = & \frac{n}{2} \log_2 \left( 2\pi e (P+N_2) \frac{\delta_2^{(n)}}{\sigma^2}
    \right).\label{eq:ub-hs2|y2}
  \end{IEEEeqnarray}

  Based on \eqref{eq:ub-hs2|y2} we now deduce an upper bound on
  $h({\bf Y}_1|{\bf S}_2)$. To this end, we first notice that for a
  sequence $\{ Z_{2,k}' \}$ that is IID $\sim \Normal{0}{N_1-N_2}$ and
  independent of $({\bf Y}_1,{\bf S}_2)$, we have that
  \begin{IEEEeqnarray*}{rCl}
    h({\bf Y}_2|{\bf S}_2) & = & h({\bf Y}_1 + {\bf Z}_2' | {\bf S}_2).
  \end{IEEEeqnarray*}
  Hence, by a conditional version of the entropy power inequality
  \cite[Inequality (17)]{blachmann65} it follows that
  \begin{IEEEeqnarray*}{rCl}
    2^{\frac{2}{n}h({\bf Y}_2|{\bf S}_2)} & = & 2^{\frac{2}{n}h({\bf Y}_1 +
      {\bf Z}_2'|{\bf S}_2)}\\ 
    & \geq & 2^{\frac{2}{n}h({\bf Y}_1|{\bf S}_2)} + 2^{\frac{2}{n}h({\bf
        Z}_2')}\\
    & = & 2^{\frac{2}{n}h({\bf Y}_1|{\bf S}_2)} + 2\pi e (N_2 - N_1).
  \end{IEEEeqnarray*}
  And thus,
  \begin{IEEEeqnarray}{rCl}
    2^{\frac{2}{n}h({\bf Y}_1|{\bf S}_2)} & \leq & 2^{\frac{2}{n}h({\bf
        Y}_2|{\bf S}_2)} - 2\pi e (N_2 - N_1) \nonumber\\
    & \leq & 2\pi e (P+N_2) \frac{\delta_2^{(n)}}{\sigma^2} - 2\pi e (N_2 -
    N_1) \nonumber\\
    & = & 2\pi e \left( (P+N_2) \frac{\delta_2^{(n)}}{\sigma^2} - N_2 + N_1
    \right), \label{eq:lm3-step3}
  \end{IEEEeqnarray}
  where in the second inequality we have used \eqref{eq:ub-hs2|y2}.
  Combining \eqref{eq:lm3-step3} with \eqref{eq:prf-lm-mut-info-basic}
  gives
  \begin{IEEEeqnarray*}{rCll}
    \hspace{5mm} I({\bf S}_1;{\bf Y}_1 | {\bf S}_2) & \leq & \frac{n}{2} \log_2 \left(
      2\pi e \left( (P+N_2) \frac{\delta_2^{(n)}}{\sigma^2} - N_2 + N_1
      \right) \right) - \frac{n}{2} \log_2 \left( 2\pi e N_1 \right) &\\[2mm]
    & = & \frac{n}{2} \log_2 \left( \frac{(P+N_2) \delta_2^{(n)}/\sigma^2 - N_2
        + N_1}{N_1} \right). & \hspace{5mm} \qedhere\\[3mm]
  \end{IEEEeqnarray*}
\end{proof}

\noindent
The proof of Lemma~\ref{lm:ub-Iw1y1} now follows easily.

\begin{proof}[Proof of Lemma~\ref{lm:ub-Iw1y1}]
  The proof only requires applying Lemma \ref{lm:bc_cond-mut-info} to
  a sequence of schemes achieving $D_{2,\textnormal{min}}$. For such a
  sequence of schemes and for any $\epsilon > 0$ there exists an
  integer $n_{\epsilon}$ such that for all $n \geq n_{\epsilon}$
  \begin{IEEEeqnarray}{rCl}\label{eq:d2n-if-achv-D2min}
    \delta_2^{(n)} < D_{2,\textnormal{min}} + \frac{\epsilon
      \sigma^2}{N_2+P} & = & \sigma^2 \frac{N_2 + \epsilon}{N_2 + P}.
  \end{IEEEeqnarray}
  By Lemma \ref{lm:bc_cond-mut-info}
  \begin{IEEEeqnarray*}{rCl}
    \bigg( \delta_2^{(n)} & \leq & \sigma^2 \frac{N_2 +
      \epsilon}{N_2+P} \bigg)
    \quad \Rightarrow \quad \bigg( I({\bf S}_1;{\bf Y}_1| {\bf S}_2) \leq 
    \frac{n}{2} \log_2 \left( \frac{\epsilon + N_1}{N_1} \right) \bigg).
  \end{IEEEeqnarray*}
  And since $I({\bf S}_1;{\bf Y}_1|{\bf S}_2) \geq I({\bf S}_1- \rho {\bf
    S}_2;{\bf Y}_1) = I({\bf W}_1 ; {\bf Y}_1)$, we obtain
  \begin{IEEEeqnarray}{rCl}\label{eq:bd-IW1Y1-d2n}
    \bigg( \delta_2^{(n)} & \leq & \sigma^2 \frac{N_2 +
      \epsilon}{N_2+P} \bigg) \quad \Rightarrow \quad \bigg( I({\bf
      W}_1 ;{\bf Y}_1) \leq \frac{n}{2} \log_2 \left( \frac{\epsilon +
        N_1}{N_1} \right) \bigg).
  \end{IEEEeqnarray}
  Combining \eqref{eq:bd-IW1Y1-d2n} with \eqref{eq:d2n-if-achv-D2min}
  concludes the proof.
\end{proof}

\subsubsection{Proof of Lemma~\ref{lm:lb-trm3}}\label{apx:prf-lm-trm3}

We first simplify the original expectation expression
\begin{IEEEeqnarray}{rl}\label{eq:bound-err-W1-S2}
  \frac{1}{n} \sum_{k=1}^n & \E{(W_{1,k} - \E{W_{1,k}|{\bf
        Y}_1})(S_{2,k} - \E{S_{2,k}|{\bf Y}_1})}\nonumber\\
  & \hspace{35mm} \stackrel{a)}{=} \frac{1}{n}
  \sum_{k=1}^n\E{(W_{1,k} - \E{W_{1,k}|{\bf Y}_1})S_{2,k}}\nonumber\\
  & \hspace{35mm} \stackrel{b)}{=} - \frac{1}{n}
  \sum_{k=1}^n\E{\E{W_{1,k}|{\bf Y}_1} S_{2,k}}\nonumber\\
  & \hspace{35mm} \geq - \frac{1}{n}
  \sum_{k=1}^n\sqrt{\E{\E{W_{1,k}|{\bf Y}_1}^2}}
  \sqrt{\E{S_{2,k}^2}}\nonumber\\
  & \hspace{35mm} \geq - \sqrt{ \frac{1}{n} \sum_{k=1}^n
    \E{\E{W_{1,k}|{\bf Y}_1}^2} } \sqrt{ \frac{1}{n} \sum_{k=1}^n
    \E{S_{2,k}^2}}\nonumber\\ 
  & \hspace{35mm} = - \sqrt{ \frac{1}{n} \sum_{k=1}^n
    \E{\E{W_{1,k}|{\bf Y}_1}^2} } \cdot \sigma,
\end{IEEEeqnarray}
where $a)$ follows since $\E{S_{2,k}|{\bf Y}_1}$ is a function of
${\bf Y}_1$ and hence is independent of $(W_{1,k} - \E{W_{1,k}|{\bf
    Y}_1})$, and $b)$ follows since $W_{1,k}$ is independent of
$S_{2,k}$.  The remaining square-root can now be bounded by means of
\eqref{eq:Dw}:
\begin{IEEEeqnarray}{rCl}\label{eq:bound-error-W1}
  \sigma^2(1-\rho^2) \frac{N_1}{\epsilon + N_1} & \leq & \frac{1}{n}
  \sum_{k=1}^n \E{(W_{1,k} - \E{W_{1,k}|{\bf Y}_1})^2}\nonumber\\[2mm]
  & = & \frac{1}{n} \sum_{k=1}^n \E{W_{1,k}^2} -2 \frac{1}{n} \sum_{k=1}^n \E{
    W_{1,k} \E{W_{1,k}|{\bf Y}_1}} + \frac{1}{n} \sum_{k=1}^n
  \E{\E{W_{1,k}|{\bf Y}_1}^2}\nonumber\\[2mm]
  & \stackrel{a)}{=} & \frac{1}{n} \sum_{k=1}^n \E{W_{1,k}^2} - \frac{1}{n}
  \sum_{k=1}^n \E{\E{W_{1,k}|{\bf Y}_1}^2}\nonumber\\[2mm]
  & = & \sigma^2(1-\rho^2) - \frac{1}{n} \sum_{k=1}^n \E{\E{W_{1,k}|{\bf
        Y}_1}^2},
\end{IEEEeqnarray}
where $a)$ follows since
\begin{IEEEeqnarray*}{rCl}
 \mat{E} \Big[ W_{1,k} \E{W_{1,k}|{\bf Y}_1} \Big] & = & \E{\E{W_{1,k}|{\bf Y}_1}^2},
\end{IEEEeqnarray*}
which holds by the orthogonality principle of the optimal
reconstructor. Hence, rearranging \eqref{eq:bound-error-W1} gives
\begin{IEEEeqnarray*}{rCl}
\frac{1}{n} \sum_{k=1}^n \E{\E{W_{1,k}|{\bf
      Y}_1}^2} & \leq & \frac{\epsilon}{N_1 + \epsilon}.
\end{IEEEeqnarray*}
Using this in \eqref{eq:bound-err-W1-S2}, finally gives
\begin{IEEEeqnarray*}{rCl}
  \hspace{14mm} \frac{1}{n} \sum_{k=1}^n\E{(W_{1,k} - \E{W_{1,k}|{\bf
      Y}_1})(S_{2,k} - \E{S_{2,k}|{\bf Y}_1})} \geq -
\sqrt{\frac{\epsilon}{N_1 + \epsilon}} \cdot \sigma. \hspace{14mm} \qed
\end{IEEEeqnarray*}


\section{Proof of Theorem \ref{thm:bc_main}}\label{sec:prf-thm}

To prove Theorem \ref{thm:bc_main} we need several
preliminaries. Those are stated now.

\begin{rmk}\label{rmk:range-D1}
  Theorem~\ref{thm:bc_main} is easily verified for $(D'_1,D'_2) \in
  \mathscr{D}$ satisfying
  \begin{equation}\label{eq:range-Dc1}
    D'_1 \geq \sigma^2 \frac{N_1 + P(1-\rho^2)}{N_1+P}.
  \end{equation}
  To prove Theorem~\ref{thm:bc_main} for such pairs $(D'_1,D'_2)$, we
  simply show that for all $P/N_1 \geq 0$, every $(D'_1,D'_2) \in
  \mathscr{D}$ satisfying \eqref{eq:range-Dc1} is achieved by the
  uncoded scheme. To see this, first note that by the definition of
  $D_{2,\textnormal{min}}$,
  \begin{IEEEeqnarray}{rCl}\label{eq:D'2>D2min}
    D'_2 & \geq & D_{2,\textnormal{min}},
  \end{IEEEeqnarray}
  whenever $(D'_1,D'_2) \in \mathscr{D}$. Also, by
  Proposition~\ref{prp:d1star-of-d2min}
  \begin{IEEEeqnarray*}{rCl}
    D_1^{\ast}(D_{2,\textnormal{min}}) & = & \sigma^2 \frac{N_1+P(1-\rho^2)}{N_1+P},
  \end{IEEEeqnarray*}
  so, for $(D'_1,D'_2) \in \mathscr{D}$ satisfying
  \eqref{eq:range-Dc1}
  \begin{IEEEeqnarray}{rCl}\label{eq:D'1>D1*}
    D'_1 & \geq & D_1^{\ast}(D_{2,\textnormal{min}}).
  \end{IEEEeqnarray}
  By Proposition~\ref{prp:d1star-of-d2min} the pair
  $(D_1^{\ast}(D_{2,\textnormal{min}}),D_{2,\textnormal{min}})$ is
  achieved by the uncoded scheme, and hence by \eqref{eq:D'2>D2min} \&
  \eqref{eq:D'1>D1*} the same must be true for any pair $(D'_1,D'_2)
  \in \mathscr{D}$ satisfying \eqref{eq:range-Dc1}.\\
\end{rmk}



\noindent
In view of Remark~\ref{rmk:range-D1} we shall assume in the rest of
the proof that $D_1$ satisfies
\begin{equation}\label{eq:range-D1}
  D_1 < \sigma^2 \frac{N_1 + P(1-\rho^2)}{N_1+P}.
\end{equation}

Next, we define $\tilde{D}_2^{\ast}(D_1)$ as the least distortion that
can be achieved in estimating ${\bf S}_2$ at Receiver~1~(!)  subject
to the constraint that Receiver~1 achieves a distortion $D_1$ in
estimating ${\bf S}_1$. More precisely:
\begin{dfn}[$\tilde{D}_2^{\ast}(D_1)$]\label{df:d2tstar}
  For every $D_1 \geq D_{1,\textnormal{min}}$, we define
  $\tilde{D}_2^{\ast}(D_1)$ as
  \begin{displaymath}
    \tilde{D}_2^{\ast} (D_1) = \inf \big\{ \tilde{D}_2 \big\}, 
  \end{displaymath}
  where the infimum is over all $\tilde{D}_2$ to which there
  correspond average-power limited encoders $\{ f^{(n)} \}$ and
  reconstructors $\big\{ \phi_1^{(n)} \big\}$, $\big\{
  \tilde{\phi}_2^{(n)} \big\}$ satisfying
  \begin{align*}
    \varlimsup_{n \rightarrow \infty} \frac{1}{n} \sum_{k=1}^n \E{(S_{1,k} -
      \hat{S}_{1,k})^2} &\leq D_1,\\
    \varlimsup_{n \rightarrow \infty} \frac{1}{n} \sum_{k=1}^n \E{(S_{2,k} -
      \tilde{S}_{2,k})^2} &\leq \tilde{D}_2,
  \end{align*}
  where $\tilde{\phi}_2^{(n)} \colon {\bf Y}_1 \mapsto \tilde{\bf
    S}_2$ is any estimator of ${\bf S}_2$ based on ${\bf Y}_1$, where
  ${\bf X}$ is the result of applying $f^{(n)}$ to $({\bf S}_1, {\bf
    S}_2)$, and where ${\bf Y}_1$ is the associated $n$-tuple received by
  Receiver~1.
\end{dfn}

\begin{rmk}\label{rmk:alt-d2t}
  The distortion $\tilde{D}_2^{\ast}(D_1)$ is the unique solution to
  the equation
  \begin{equation}\label{eq:df-d2t-st}
    R_{S_1,S_2} (D_1, \tilde{D}_2^{\ast}(D_1)) = \frac{1}{2} \log_2 \left(
      1 + \frac{P}{N_1} \right),
  \end{equation}
  where $R_{S_1,S_2}(\cdot,\cdot)$ denotes the rate-distortion function
  on the pair $S_1,S_2$ when it is observed by a common encoder, i.e.
  \begin{displaymath}
    R_{S_1,S_2}(\Delta_1,\Delta_2) = \min_{\substack{P_{T_1, T_2 |
          S_1,S_2}:\\ \E{(S_1 - T_1)^2} \leq \Delta_1 \\ \E{(S_2 - T_2)^2}
        \leq \Delta_2}} I(S_1,S_2;T_1,T_2).
  \end{displaymath}
\end{rmk}

\noindent
The next proposition gives the explicit form of
$\tilde{D}_2^{\ast}(D_1)$ for the cases of interest to us.
\begin{prp}\label{prp:bivar-AWGN}
  Consider transmitting the bivariate Gaussian source
  \eqref{eq:source-law} over the AWGN channel that connects the
  transmitter to Receiver 1. For any $D_1$ satisfying
  \eqref{eq:range-D1} and $P/N_1$ satisfying \eqref{eq:snr-threshold},
  the distortion $\tilde{D}_2^{\ast}(D_1)$ is given by
  \begin{equation}\label{eq:d2t-star}
    \tilde{D}_2^{\ast}(D_1) = \sigma^2 \frac{P^2 \alpha^2 (1-\rho^2)
      + PN_1(\alpha^2(2-\rho^2) + 2\alpha \beta \rho + \beta^2) +
      N_1^2(\alpha^2 + 2\alpha\beta\rho + \beta^2)}{(P+N_1)^2
      (\alpha^2 + 2 \alpha \beta \rho + \beta^2)}, 
  \end{equation}
  where $\alpha,\beta$ are such that
  $D_1^{\textnormal{u}}(\alpha,\beta) = D_1$. Moreover, the pair
  $(D_1,\tilde{D}_2^{\ast}(D_1))$ is achieved by the uncoded scheme
  with the above choice
  of $\alpha$ and $\beta$.
\end{prp}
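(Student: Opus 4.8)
The plan is to obtain $\tilde D_2^{\ast}(D_1)$ as the common value of an achievability bound, furnished by the uncoded scheme, and a converse bound, furnished by source--channel separation (Remark~\ref{rmk:alt-d2t}). Throughout, write $\bar D_2$ for the right-hand side of \eqref{eq:d2t-star}.

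\emph{Achievability.} I would use the uncoded input \eqref{eq:uncoded-input} with $\beta=1-\alpha$ and first note that some $\alpha\in[0,1]$ satisfies $D_1^{\textnormal{u}}(\alpha,1-\alpha)=D_1$: the map $\alpha\mapsto D_1^{\textnormal{u}}(\alpha,1-\alpha)$ is continuous with $D_1^{\textnormal{u}}(1,0)=D_{1,\textnormal{min}}=\sigma^2 N_1/(N_1+P)$ and $D_1^{\textnormal{u}}(0,1)=\sigma^2(N_1+P(1-\rho^2))/(N_1+P)$, so by the intermediate value theorem it attains every value in $[D_{1,\textnormal{min}},\,\sigma^2(N_1+P(1-\rho^2))/(N_1+P))$, an interval containing $D_1$ by \eqref{eq:range-D1} and the fact that $\tilde D_2^{\ast}(\cdot)$ is only defined for arguments at least $D_{1,\textnormal{min}}$. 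Fixing such an $\alpha$, the scalar MMSE estimates $\E{S_{1,k}\mid Y_{1,k}}$ and $\E{S_{2,k}\mid Y_{1,k}}$ at Receiver~1 are linear, and a routine Gaussian computation shows they incur distortions $D_1$ on ${\bf S}_1$ and $\bar D_2$ on ${\bf S}_2$; indeed $\bar D_2$ is just the expression \eqref{eq:expr-D2u} for $D_2^{\textnormal{u}}$ with $N_2$ replaced by $N_1$. Hence $\tilde D_2^{\ast}(D_1)\le\bar D_2$, and this same scheme proves the ``moreover'' claim once the reverse inequality is in hand.

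\emph{Converse.} By Remark~\ref{rmk:alt-d2t}, $\tilde D_2^{\ast}(D_1)$ is the unique solution of $R_{S_1,S_2}(D_1,\tilde D_2)=\frac{1}{2}\log_2(1+P/N_1)$, so it suffices to check that $\tilde D_2=\bar D_2$ solves this equation. I would first record the closed form of the bivariate-Gaussian rate--distortion function under separate mean-squared-error constraints: since Gaussian test channels are optimal and $I(S_1,S_2;T_1,T_2)=\frac{1}{2}\log_2\bigl(\det(\cov{S})/\det\Lambda\bigr)$ for the $2\times2$ reconstruction-error covariance $\Lambda$, one maximizes $\det\Lambda$ over positive semidefinite $\Lambda\preceq\cov{S}$ with diagonal entries at most $D_1$ and $D_2$, obtaining
\begin{displaymath}
  R_{S_1,S_2}(D_1,D_2)=\left\{\begin{array}{l l}
  \frac{1}{2}\log_2\dfrac{\sigma^4(1-\rho^2)}{D_1 D_2} & \text{if }(\sigma^2-D_1)(\sigma^2-D_2)\ge\sigma^4\rho^2,\\[3mm]
  \frac{1}{2}\log_2\dfrac{\sigma^4(1-\rho^2)}{D_1 D_2-\bigl(\sigma^2\rho-\sqrt{(\sigma^2-D_1)(\sigma^2-D_2)}\,\bigr)^{2}} & \text{otherwise.}
  \end{array}\right.
\end{displaymath}
The SNR condition \eqref{eq:snr-threshold} is exactly what (together with \eqref{eq:range-D1}) puts the pair $(D_1,\bar D_2)$ into the second (``correlated-error'') branch: equating the two branches of $R_{S_1,S_2}$ along the rate level $\frac{1}{2}\log_2(1+P/N_1)$ and solving for the SNR yields precisely $P/N_1=\Gamma(D_1,\sigma^2,\rho)$ as the crossover, so that for $P/N_1\le\Gamma$ the decoupled value $\sigma^4(1-\rho^2)N_1/((N_1+P)D_1)$ is not attainable. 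In that branch — where, since $D_1$ lies in its range, both diagonal constraints are active and $\cov{S}-\Lambda$ is singular at the optimum — substituting $D_1=D_1^{\textnormal{u}}(\alpha,\beta)$ into $\sqrt{(\sigma^2-D_1)(\sigma^2-\bar D_2)}$ and simplifying shows that $\tilde D_2=\bar D_2$ is indeed the solution of $R_{S_1,S_2}(D_1,\tilde D_2)=\frac{1}{2}\log_2(1+P/N_1)$. (Equivalently, with lighter algebra: for the uncoded scheme $({\bf S}_1,{\bf S}_2)-{\bf X}-{\bf Y}_1$ is Markov, so $\frac{1}{n}I({\bf S}_1,{\bf S}_2;{\bf Y}_1)=I(X_1;Y_{1,1})=\frac{1}{2}\log_2(1+P/N_1)$, whence the conditional covariance of $(S_{1,k},S_{2,k})$ given $Y_{1,k}$ is a feasible $\Lambda$ of determinant $\det(\cov{S})/(1+P/N_1)$ with diagonal $(D_1,\bar D_2)$; as $\cov{S}-\Lambda$ is rank one and the $D_1$-constraint is active, this $\Lambda$ maximizes $\det\Lambda$, so $R_{S_1,S_2}(D_1,\bar D_2)=\frac{1}{2}\log_2(1+P/N_1)$.)

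\emph{Main obstacle.} The crux is the converse computation: establishing the closed form of the bivariate rate--distortion function, and in particular verifying that in the present SNR/distortion regime the determinant-maximizing $\Lambda$ sits on the boundary ($\cov{S}-\Lambda$ singular) with the $D_1$-constraint active, and then grinding through the algebra that matches its correlated-error branch to \eqref{eq:d2t-star}. Part of this is pinning down that $P/N_1\le\Gamma(D_1,\sigma^2,\rho)$ is precisely the condition under which the reconstruction errors at Receiver~1 must be correlated, which is why below the threshold the uncoded curve is optimal (and above it is not). The scalar-MMSE evaluations and the intermediate-value argument in the achievability half are routine by comparison.
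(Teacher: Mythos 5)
Your proposal is correct and follows essentially the same route as the paper: both rest on Remark~\ref{rmk:alt-d2t} together with the explicit bivariate Gaussian rate--distortion function $R_{S_1,S_2}$ (which the paper simply cites from \cite[Equation~(10)]{lapidoth-tinguely08-mac-it}), verifying that the right-hand side of \eqref{eq:d2t-star} solves \eqref{eq:df-d2t-st} and noting that it coincides with the uncoded scheme's distortion on ${\bf S}_2$ at Receiver~1. Your extra ingredients --- the intermediate-value argument for the existence of $\alpha$, the branch analysis showing the crossover SNR is exactly $\Gamma$, and the rank-one conditional-covariance shortcut --- are sound elaborations of details the paper leaves to the cited closed form rather than a different method.
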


\begin{proof}
  For any $D_1$ satisfying \eqref{eq:range-D1} and $P/N_1$ satisfying
  \eqref{eq:snr-threshold}, let $\zeta$ denote the RHS of
  \eqref{eq:d2t-star} with $\alpha,\beta$ satisfying
  $D_1^{\textnormal{u}}(\alpha,\beta) = D_1$. Using the explicit form
  of $R_{S_1,S_2} (\cdot, \cdot)$, as given in
  \cite[Equation~(10)]{lapidoth-tinguely08-mac-it}, we obtain that
  $R_{S_1,S_2} (D_1, \zeta)$ equals the RHS of
  \eqref{eq:df-d2t-st}. Thus, by Remark~\eqref{rmk:alt-d2t} it follows
  that $\tilde{D}^{\ast}_2(D_1) = \zeta$. Moreover, by
  \eqref{eq:expr-D2u} and our definition of $\zeta$, it follows that
  $\zeta =
  D^{\textnormal{u}}_2(\alpha,\beta)$ where $\alpha,\beta$ are such that
  $D_1^{\textnormal{u}}(\alpha,\beta) = D_1$. Thus, $(D_1, \zeta)$,
  i.e., $(D_1,\tilde{D}_2^{\ast}(D_1))$ is achieved by the uncoded
  scheme with that choice of $\alpha,\beta$.
\end{proof}

The heart of the proof of Theorem~\ref{thm:bc_main} is given in the
following lemma. It characterizes the trade-off between the
reconstruction fidelity $D_1$ at Receiver~1 and the reconstruction
fidelity $D_2$ at Receiver~2.
\begin{lm}\label{lm:prf-main-thm}
  If the pair $(D_1,D_2) \in \mathscr{D}$ satisfies
  \eqref{eq:range-D1}, and if $P/N_1$ satisfies
  \eqref{eq:snr-threshold}, then for all real numbers $a_1,a_2$ of
  equal sign,
  \begin{equation}\label{eq:lb-D2}
    D_2 \geq \Psi(D_1,a_1,a_2), 
  \end{equation}
  where
  \begin{equation}\label{eq:Psi}
    \Psi(\delta,a_1,a_2) \eqdef \frac{\sigma^2}{P+N_2} \left( \frac{\sigma^2
        (1-\rho^2) N_1}{\eta(\delta,a_1,a_2)} + N_2 - N_1 \right), 
  \end{equation}
  and where
  \begin{IEEEeqnarray*}{rCl}
    \eta(\delta,a_1,a_2) & = & \sigma^2 - a_1 (\sigma^2-\delta)(2-a_1) -
    a_2 \sigma^2 (2\rho - a_2) + 2a_1a_2
    \sqrt{(\sigma^2-\delta)(\sigma^2-\tilde{D}_2^{\ast})},
  \end{IEEEeqnarray*}
  where we have used the shorthand notation $\tilde{D}_2^{\ast}$ for
  $\tilde{D}_2^{\ast}(\delta)$, which is given explicitly in
  Proposition~\ref{prp:bivar-AWGN}.
\end{lm}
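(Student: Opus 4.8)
The plan is a converse (lower-bound) argument localized entirely at Receiver~1. Fix real numbers $a_1,a_2$ of equal sign and a sequence of schemes witnessing $(D_1,D_2)\in\mathscr{D}$. Since MMSE reconstructors only decrease distortion, we may assume the reconstructors are $\hat{\bf S}_1=\E{{\bf S}_1|{\bf Y}_1}$ at Receiver~1 and $\E{{\bf S}_2|{\bf Y}_2}$ at Receiver~2, and we additionally introduce the auxiliary estimator $\hat{\bf S}_2=\E{{\bf S}_2|{\bf Y}_1}$ (an instance of $\tilde{\phi}_2^{(n)}$ in Definition~\ref{df:d2tstar}); write $d_1^{(n)}\leq\delta_1^{(n)}$ and $\tilde{d}_2^{(n)}$ for the per-letter mean squared errors of $\hat{\bf S}_1$ and $\hat{\bf S}_2$. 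The first step uses the degradedness ${\bf Y}_2\stackrel{d}{=}{\bf Y}_1+{\bf Z}_2'$, where ${\bf Z}_2'$ is IID $\Normal{0}{N_2-N_1}$ and independent of $({\bf Y}_1,{\bf S}_2)$. Combining $\delta_2^{(n)}\geq\sigma^2 2^{-\frac{2}{n}I({\bf S}_2;{\bf Y}_2)}$ (rate--distortion for the IID Gaussian ${\bf S}_2$), $h({\bf Y}_2)\leq\frac{n}{2}\log_2\big(2\pi e(P+N_2)\big)$ (maximum entropy under the power constraint), and the conditional entropy power inequality applied to $h({\bf Y}_2|{\bf S}_2)=h({\bf Y}_1+{\bf Z}_2'|{\bf S}_2)$ --- exactly as in the proof of Lemma~\ref{lm:bc_cond-mut-info} --- one obtains
\[
  \delta_2^{(n)}\ \geq\ \frac{\sigma^2}{P+N_2}\left(\frac{2^{\frac{2}{n}h({\bf Y}_1|{\bf S}_2)}}{2\pi e}+N_2-N_1\right),
\]
which already has the structure of $\Psi$; the whole lemma reduces to lower bounding $h({\bf Y}_1|{\bf S}_2)$.

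Next, since ${\bf W}_1={\bf S}_1-\rho{\bf S}_2$ is independent of ${\bf S}_2$ and since $({\bf W}_1,{\bf S}_2)$ determines ${\bf X}$ (hence ${\bf Y}_1$ up to the noise ${\bf Z}_1$), we have $h({\bf Y}_1|{\bf S}_2)=\frac{n}{2}\log_2(2\pi e N_1)+I({\bf W}_1;{\bf Y}_1|{\bf S}_2)$, so the bracketed term above equals $N_1\,2^{\frac{2}{n}I({\bf W}_1;{\bf Y}_1|{\bf S}_2)}$. Running rate--distortion in reverse: for each value ${\bf s}_2$ of ${\bf S}_2$ the vector ${\bf W}_1$ is still IID $\Normal{0}{\sigma^2(1-\rho^2)}$, hence $I({\bf W}_1;{\bf Y}_1|{\bf S}_2={\bf s}_2)\geq\frac{n}{2}\log_2\big(\sigma^2(1-\rho^2)/e({\bf s}_2)\big)$, where $e({\bf s}_2)$ is the conditional MMSE of ${\bf W}_1$ given $({\bf Y}_1,{\bf S}_2={\bf s}_2)$; averaging over ${\bf s}_2$ and using concavity of the logarithm (Jensen) yields
\[
  \sigma^2(1-\rho^2)\,2^{-\frac{2}{n}I({\bf W}_1;{\bf Y}_1|{\bf S}_2)}\ \leq\ \frac{1}{n}\,\E{\big\|{\bf W}_1-\E{{\bf W}_1|{\bf Y}_1,{\bf S}_2}\big\|^2}.
\]
It thus suffices to bound this conditional MMSE by $\eta(d_1^{(n)},a_1,a_2)$.

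To do so I would replace the optimal estimator by the suboptimal linear one $a_1\hat{\bf S}_1+(a_2-\rho){\bf S}_2$, whose error equals ${\bf S}_1-a_1\hat{\bf S}_1-a_2{\bf S}_2$. Expanding $\frac{1}{n}\E{\|{\bf S}_1-a_1\hat{\bf S}_1-a_2{\bf S}_2\|^2}$ and using the MMSE orthogonality identities $\frac{1}{n}\E{\langle{\bf S}_1,\hat{\bf S}_1\rangle}=\frac{1}{n}\E{\|\hat{\bf S}_1\|^2}=\sigma^2-d_1^{(n)}$, $\frac{1}{n}\E{\langle\hat{\bf S}_1,{\bf S}_2\rangle}=\frac{1}{n}\E{\langle\hat{\bf S}_1,\hat{\bf S}_2\rangle}=:r^{(n)}$, and $\frac{1}{n}\E{\langle{\bf S}_1,{\bf S}_2\rangle}=\rho\sigma^2$ gives
\[
  \frac{1}{n}\E{\|{\bf S}_1-a_1\hat{\bf S}_1-a_2{\bf S}_2\|^2}=\sigma^2-a_1(\sigma^2-d_1^{(n)})(2-a_1)-a_2\sigma^2(2\rho-a_2)+2a_1a_2\,r^{(n)}.
\]
The equal-sign hypothesis is exactly what makes $2a_1a_2\geq0$, so the Cauchy--Schwarz bound $r^{(n)}\leq\sqrt{(\sigma^2-d_1^{(n)})(\sigma^2-\tilde{d}_2^{(n)})}$ may be inserted in the direction matching $\eta$. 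Finally, the rate--distortion bound $R_{S_1,S_2}(d_1^{(n)},\tilde{d}_2^{(n)})\leq\frac{1}{2}\log_2(1+P/N_1)$ for the AWGN link to Receiver~1, the characterization of $\tilde{D}_2^{\ast}$ in Remark~\ref{rmk:alt-d2t}, and monotonicity of $R_{S_1,S_2}$ together force $\tilde{d}_2^{(n)}\geq\tilde{D}_2^{\ast}(d_1^{(n)})$; substituting this shows the quantity above is at most $\eta(d_1^{(n)},a_1,a_2)$. Chaining all inequalities gives $\delta_2^{(n)}\geq\Psi(d_1^{(n)},a_1,a_2)$ for every $n$, and $D_2\geq\Psi(D_1,a_1,a_2)$ follows by passing to a subsequence along which $d_1^{(n)}$ converges --- necessarily to a limit $\leq\varlimsup_n\delta_1^{(n)}\leq D_1$ --- and then using continuity and monotonicity of $\Psi(\cdot,a_1,a_2)$ in its first argument on the range singled out by \eqref{eq:range-D1}.

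The main conceptual obstacle is the third step: guessing the suboptimal estimator and checking that Cauchy--Schwarz together with the single-link rate--distortion constraint on $(d_1^{(n)},\tilde{d}_2^{(n)})$ reproduce \emph{precisely} the function $\eta$, and recognizing that ``$a_1,a_2$ of equal sign'' is exactly the hypothesis under which the Cauchy--Schwarz step points the right way. The main technical nuisance is the final passage to the limit, where one must reconcile the finite-blocklength errors $d_1^{(n)},\tilde{d}_2^{(n)}$ with the asymptotic $D_1$ and $\tilde{D}_2^{\ast}(D_1)$; this is where the explicit form of $\tilde{D}_2^{\ast}$, valid under the SNR condition \eqref{eq:snr-threshold} by Proposition~\ref{prp:bivar-AWGN}, is used --- namely to verify that $\Psi(\cdot,a_1,a_2)$ is nonincreasing on the relevant interval.
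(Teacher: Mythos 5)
Your finite-blocklength core is sound and is essentially the paper's argument in a lightly reorganized form: your first two steps (rate--distortion at Receiver~2, the maximum-entropy bound, the conditional entropy power inequality via the degradedness ${\bf Y}_2\stackrel{d}{=}{\bf Y}_1+{\bf Z}_2'$, and then conditional rate--distortion for ${\bf W}_1={\bf S}_1-\rho{\bf S}_2$ given ${\bf S}_2$) reproduce Lemma~\ref{lm:bc_cond-mut-info} combined with \eqref{eq:lb-delta1}; and your third step (the suboptimal linear estimator with coefficients $a_1,a_2$, the MMSE orthogonality identities, and the cross-term bound via Cauchy--Schwarz together with the converse $R_{S_1,S_2}(d_1^{(n)},\tilde d_2^{(n)})\leq\tfrac12\log_2(1+P/N_1)$ and Remark~\ref{rmk:alt-d2t}) reproduces Lemma~\ref{lm:bc_ub} and \eqref{eq:ub-delta1}, giving $\delta_2^{(n)}\geq\Psi(d_1^{(n)},a_1,a_2)$ exactly as in Lemma~\ref{lm:prf-main-lm}. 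The only genuine deviation, and the place where your proof has a gap, is the passage to the limit.

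You pass from $D_2\geq\Psi(d^{\ast},a_1,a_2)$, with $d^{\ast}\leq D_1$ a subsequential limit of $d_1^{(n)}$, to $D_2\geq\Psi(D_1,a_1,a_2)$ by asserting that $\Psi(\cdot,a_1,a_2)$ is nonincreasing on the interval singled out by \eqref{eq:range-D1}. That monotonicity is false for general $a_1,a_2$ of equal sign, which is the generality in which the lemma is stated: take $a_2=0$ and $a_1<0$ (or $a_1>2$); then $\eta(\delta,a_1,0)=\sigma^2+a_1(a_1-2)(\sigma^2-\delta)$ is strictly decreasing in $\delta$, so $\Psi$ is strictly increasing in $\delta$, and your last inequality points the wrong way whenever the scheme's Receiver-1 distortion stays strictly below $D_1$ in the limit (which a scheme achieving $(D_1,D_2)$ is free to do). The paper closes this without any monotonicity of $\Psi$: it first reduces to boundary pairs $(D_1,D_2^{\ast}(D_1))$ (legitimate since $D_2\geq D_2^{\ast}(D_1)$) and then proves in Lemma~\ref{lm:boundary-equality} --- using the convexity of $\mathscr{D}$ and Proposition~\ref{prp:d1star-of-d2min} to show that $D_2^{\ast}(\cdot)$ is strictly decreasing on $\bigl(D_{1,\textnormal{min}},\,\sigma^2\tfrac{N_1+P(1-\rho^2)}{N_1+P}\bigr]$ --- that any sequence of schemes achieving such a boundary pair must achieve $D_1$ with equality (Reduction~\ref{redc:D1-equality}); one can then pick a subsequence with $\delta_1^{(n_k)}\to D_1$ and invoke only the continuity of $\Psi$ in $\delta$. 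You need this reduction (or some other argument forcing $d_1^{(n)}\to D_1$) to finish; the claimed monotonicity of $\Psi$ cannot be ``verified'' because it does not hold.
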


\begin{proof}
  See Appendix~\ref{sec:prf-main-lm}.
\end{proof}

We are now ready to prove Theorem \ref{thm:bc_main}.

\begin{proof}[Proof of Theorem \ref{thm:bc_main}]
  By Lemma~\ref{lm:prf-main-thm} it remains to verify that there exist
  real numbers $a_1,a_2$ of equal sign such that $(D_1,
  \Psi(D_1,a_1,a_2))$ coincides with the distortions achieved by the
  uncoded scheme. To this end, consider
  \begin{IEEEeqnarray}{rCl}
    a_1^{\ast} & = & \frac{(\sigma^2 - D_1)\sigma^2 - \rho \sigma^2
      \sqrt{(\sigma^2-D_1)(\sigma^2-\tilde{D}_2^{\ast}(D_1))}}{(\sigma^2
      - D_1)\tilde{D}_2^{\ast}(D_1)}, \label{eq:choice-a1*}\\[3mm]
    a_2^{\ast} & = & \frac{\rho \sigma^2 -
      \sqrt{(\sigma^2-D_1)(\sigma^2-\tilde{D}_2^{\ast}(D_1))}}{\tilde{D}_2^{\ast}(D_1)}. \label{eq:choice-a2*}
  \end{IEEEeqnarray}
  We first show that $a_1^{\ast},a_2^{\ast}$ are both nonnegative, and
  thus indeed of equal sign. That $a_1^{\ast}$ is nonnegative follows
  from \eqref{eq:choice-a1*} by noting that
  \begin{IEEEeqnarray*}{rCl}
    D_1 < \sigma^2 \frac{N_1 + P(1-\rho^2)}{N_1+P} & \qquad \text{and}
    \qquad & \tilde{D}_2^{\ast}(D_1) \geq \sigma^2 \frac{N_1}{P+N_1},
  \end{IEEEeqnarray*}
  where the upper bound on $D_1$ is the one assumed in
  \eqref{eq:range-D1}, and the lower bound on $\tilde{D}_2^{\ast}$
  follows by the classical single-user result \cite[Theorem 9.6.3,
  p.~473]{gallager68}. To show that $a_2^{\ast}$ is nonnegative, we
  distinguish between two cases. If $D_1 \in
  (\sigma^2(1-\rho^2),\sigma^2]$, then the nonnegativity follows
  directly from \eqref{eq:choice-a2*} and from the fact that $0 <
  D_2^{\ast}(D_1) \leq \sigma^2$. Otherwise, if $D_1 \in
  (0,\sigma^2(1-\rho^2)]$, then the nonnegativity of $a_2^{\ast}$
  follows from \eqref{eq:choice-a2*}, using the inequality
  \begin{IEEEeqnarray*}{rCl}
    D_2^{\ast}(D_1) & \geq & \left( \sigma^2(1-\rho^2) - D_1\right)
    \frac{\sigma^2}{\sigma^2 - D_1},
  \end{IEEEeqnarray*}
  an inequality which can be established using \eqref{eq:df-d2t-st},
  the explicit form of $R_{S_1,S_2} (\cdot, \cdot)$
  \cite[Equation~(10)]{lapidoth-tinguely08-mac-it}, and the assumption
  that $P/N_1$ satisfies \eqref{eq:snr-threshold}.

  Having established that $a_1^{\ast}$ and $a_2^{\ast}$ are of equal
  sign, the proof now follows from Lemma~\ref{lm:prf-main-thm} by
  verifying that if $(D_1,D_2) \in \mathscr{D}$ satisfies
  \eqref{eq:range-D1}, and if $P/N_1$ satisfies
  \eqref{eq:snr-threshold}, then choosing $(\alpha,\beta)$ so that
  $D_1^{\textnormal{u}}(\alpha,\beta) = D_1$ results in
  $D_2^{\textnormal{u}}(\alpha,\beta)$ satisfying
  \begin{IEEEeqnarray*}{rCl}\label{eq:D2u=Psi}
    \hspace{47mm} D_2^{\textnormal{u}}(\alpha,\beta) =
    \Psi(D_1,a_1^{\ast},a_2^{\ast}). \hspace{48mm} \qedhere
  \end{IEEEeqnarray*}
%
%
\end{proof}

\subsection{Proof of Lemma \ref{lm:prf-main-thm}}\label{sec:prf-main-lm}

To prove Lemma \ref{lm:prf-main-thm}, we begin with a reduction.
\begin{rdc}\label{redc:D1-equality}
  To prove Lemma \ref{lm:prf-main-thm} it suffices to consider pairs
  $(D_1,D_2) \in \mathscr{D}$ that are achievable by coding schemes
  that achieve $D_1$ with equality
  \begin{equation}
    \varlimsup_{n \rightarrow \infty} \frac{1}{n} \sum_{k=1}^n
    \E{(S_{1,k} - \hat{S}_{1,k})^2} = D_1,
  \end{equation}
  and for which
  \begin{equation}\label{eq:rec-orthog}
    \phi_i^{(n)} ({\bf Y}_i) = \E{{\bf S}_i | {\bf Y}_i} \qquad i \in
    \{ 1,2 \}.
  \end{equation}
\end{rdc}

The proof of Reduction~\ref{redc:D1-equality} is based on the
following lemma.

\begin{lm}\label{lm:boundary-equality}
  Any sequence of schemes achieving some boundary point
  $(D_1,D_2^{\ast}(D_1))$ where $D_1$ satisfies \eqref{eq:range-D1},
  must achieve both distortions with equality, i.e.
  \begin{IEEEeqnarray}{rCl}
    \varlimsup_{n \rightarrow \infty} \frac{1}{n} \sum_{k=1}^n \E{(S_{1,k}
      - \hat{S}_{1,k})^2} & = & D_1,\label{eq:D1-equality}\\
    \varlimsup_{n \rightarrow \infty} \frac{1}{n} \sum_{k=1}^n \E{(S_{2,k}
      - \hat{S}_{2,k})^2} & = & D_2^{\ast}(D_1).\label{eq:D2-equality}
  \end{IEEEeqnarray}
\end{lm}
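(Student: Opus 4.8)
The plan is to prove the contrapositive for each coordinate separately, using the convexity and closedness of $\mathscr{D}$ (Remark~\ref{rmk:prpty-D}) together with the monotone structure of the boundary. First I would argue \eqref{eq:D1-equality}. Suppose a sequence of schemes achieves $(D_1, D_2^{\ast}(D_1))$ but with $\varlimsup_n \delta_1^{(n)} = D_1' < D_1$. Then by definition $(D_1', D_2^{\ast}(D_1)) \in \mathscr{D}$. I would then show that this forces $D_2^{\ast}(D_1') \le D_2^{\ast}(D_1)$, and in fact produce a point strictly below the boundary at abscissa $D_1$, contradicting the minimality in Definition~\ref{df:Di-star} --- more precisely, I would use that $D_2^{\ast}(\cdot)$ is nonincreasing (immediate from the definition, since enlarging the $D_1$-budget only enlarges the feasible set) together with the fact that in the range \eqref{eq:range-D1} the boundary is \emph{strictly} decreasing. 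Strict monotonicity in this range is the crucial structural input: it should follow from Proposition~\ref{prp:bivar-AWGN} and the explicit uncoded expressions \eqref{eq:expr-D2u}, or alternatively from the rate--distortion characterization \eqref{eq:df-d2t-st}, since for $D_1 < \sigma^2\frac{N_1+P(1-\rho^2)}{N_1+P}$ a genuine trade-off is present (the receiver cannot yet simultaneously be at $D_{1,\textnormal{min}}$-type performance on both components). Given strict monotonicity, $D_1' < D_1$ yields $D_2^{\ast}(D_1') > D_2^{\ast}(D_1) \ge \delta_2$ of any scheme at $D_1'$, but our scheme realizes $\delta_2^{(n)} \to D_2^{\ast}(D_1) < D_2^{\ast}(D_1')$ while achieving $\delta_1^{(n)} \to D_1' $, contradicting the definition of $D_2^{\ast}(D_1')$ as the \emph{minimum} second distortion compatible with first distortion $D_1'$. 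Hence $\varlimsup_n \delta_1^{(n)} = D_1$.

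Next I would establish \eqref{eq:D2-equality}. Here suppose the scheme achieves $\varlimsup_n \delta_2^{(n)} = D_2' < D_2^{\ast}(D_1)$ while $\varlimsup_n \delta_1^{(n)} \le D_1$. Then $(D_1, D_2') \in \mathscr{D}$ (using closedness of $\mathscr{D}$ from Remark~\ref{rmk:prpty-D} to pass to the limit), so by definition $D_2^{\ast}(D_1) \le D_2' < D_2^{\ast}(D_1)$, an outright contradiction. This direction is the easy one and needs nothing beyond the definition of $D_2^{\ast}$ and closedness of $\mathscr{D}$. Combining the two parts gives that \emph{any} sequence of schemes sitting on the boundary point $(D_1, D_2^{\ast}(D_1))$ in the regime \eqref{eq:range-D1} must achieve both distortions with equality.

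The main obstacle I anticipate is cleanly justifying the \emph{strict} decrease of $D_2^{\ast}(\cdot)$ on the interval dictated by \eqref{eq:range-D1} without circularity --- i.e., without invoking Theorem~\ref{thm:bc_main} itself, whose proof this lemma feeds into. I would route around this by appealing only to the already-proved Proposition~\ref{prp:d1star-of-d2min} (which pins down the right endpoint $D_1^{\ast}(D_{2,\textnormal{min}}) = \sigma^2\frac{N_1+P(1-\rho^2)}{N_1+P}$ of the nontrivial range) and to the converse lower bound machinery of Appendix~\ref{apx:prf-d1star-of-d2min}: the latter shows that whenever $\delta_2^{(n)}$ is strictly above $D_{2,\textnormal{min}}$ there is genuine slack, forcing $\delta_1^{(n)}$ strictly above $D_{1,\textnormal{min}}$ by a quantifiable amount, which is exactly strict monotonicity in disguise. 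If even this feels too close to the theorem being proved, an alternative is to prove \eqref{eq:D1-equality} directly: were $\varlimsup_n \delta_1^{(n)} = D_1' < D_1$, a time-sharing/augmentation argument (as in Remark~\ref{rmk:prpty-D}'s convexity proof) between this scheme and one optimized for Receiver~2 alone would produce a scheme achieving $(D_1, D_2'')$ with $D_2'' < D_2^{\ast}(D_1)$, again contradicting the definition of $D_2^{\ast}(D_1)$; this keeps the whole argument internal to the definitions of $\mathscr{D}$ and $D_2^{\ast}$ and their convexity/closedness, which is the cleanest path.
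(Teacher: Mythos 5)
Your overall skeleton is the paper's: the $D_2$-equality \eqref{eq:D2-equality} is immediate from the minimality in Definition~\ref{df:Di-star}, and the whole weight of the lemma falls on showing that $D_2^{\ast}(\cdot)$ strictly decreases on the range \eqref{eq:range-D1}, so that $\varlimsup_n\delta_1^{(n)}<D_1$ would contradict optimality of the second coordinate. However, your primary suggestions for that key step do not hold up as stated. Proposition~\ref{prp:bivar-AWGN} and the characterization \eqref{eq:df-d2t-st} concern $\tilde{D}_2^{\ast}(D_1)$, the fidelity of estimating ${\bf S}_2$ \emph{at Receiver~1}, not the boundary $D_2^{\ast}(\cdot)$ of $\mathscr{D}$; and the uncoded expressions \eqref{eq:expr-D2u} only give achievability, i.e.\ upper bounds on the boundary, which cannot establish that the true boundary is strictly decreasing. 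Likewise, the appendix converse behind Proposition~\ref{prp:d1star-of-d2min} only treats the corner $\delta_2^{(n)}\approx D_{2,\textnormal{min}}$; it does not show that ``any slack in $\delta_2$ forces slack in $\delta_1$'' across the whole interval, so calling it ``strict monotonicity in disguise'' overstates what it proves. The missing ingredient is convexity of $\mathscr{D}$: the paper deduces that $D_2^{\ast}(\cdot)$ is convex, notes from Proposition~\ref{prp:d1star-of-d2min} that $D_2^{\ast}(D_1)>D_{2,\textnormal{min}}$ whenever $D_1<\sigma^2\frac{N_1+P(1-\rho^2)}{N_1+P}$ while the value $D_{2,\textnormal{min}}$ is attained at that threshold, and then invokes an elementary convex-function fact to get strict decrease on $\bigl(D_{1,\textnormal{min}},\sigma^2\frac{N_1+P(1-\rho^2)}{N_1+P}\bigr]$.

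Your closing alternative, on the other hand, is sound and is essentially the paper's argument in a different packaging: if $\varlimsup_n\delta_1^{(n)}=D_1'<D_1$, then $(D_1',D_2^{\ast}(D_1))\in\mathscr{D}$, and time-sharing (convexity, Remark~\ref{rmk:prpty-D}) with the Receiver-2-optimal point $\bigl(\sigma^2\frac{N_1+P(1-\rho^2)}{N_1+P},D_{2,\textnormal{min}}\bigr)$ from Proposition~\ref{prp:d1star-of-d2min}, with the mixing weight chosen so the first coordinate equals $D_1$, yields a point $(D_1,D_2'')$ with $D_2''<D_2^{\ast}(D_1)$ --- the strict inequality again using the converse part of Proposition~\ref{prp:d1star-of-d2min} to guarantee $D_2^{\ast}(D_1)>D_{2,\textnormal{min}}$ --- contradicting Definition~\ref{df:Di-star}. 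Had you led with that argument (convexity of $\mathscr{D}$ plus Proposition~\ref{prp:d1star-of-d2min}) rather than the routes through $\tilde{D}_2^{\ast}$, the proposal would be complete; as written, the main line has a gap exactly at the step you yourself flagged as crucial.
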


\begin{proof}
  That $D_2^{\ast}(D_1)$ must be achieved with equality by any
  sequence of schemes achieving $(D_1,D_2^{\ast}(D_1))$, follows from
  Definition \ref{df:Di-star} of $D_2^{\ast}(D_1)$.
  
  We now show that if $D_1$ satisfies \eqref{eq:range-D1}, then also
  $D_1$ must be achieved with equality. As we next show, to this end
  it suffices to show that for all $D_1$ satisfying
  \eqref{eq:range-D1}, the function $D_2^{\ast}(\cdot)$ is strictly
  decreasing. Indeed, if $D_2^{\ast}(\cdot)$ is strictly decreasing
  for all $D_1$ satisfying \eqref{eq:range-D1}, then a pair
  $(D_1',D_2^{\ast}(D_1))$ for any $D_1$ satisfying
  \eqref{eq:range-D1} is achievable only if $D_1' \geq D_1$.  Hence,
  any sequence of schemes achieving $(D_1,D_2^{\ast}(D_1))$ with $D_1$
  satisfying \eqref{eq:range-D1}, must achieve $D_1$ with equality.

  It thus remains to show that for all $D_1$ satisfying
  \eqref{eq:range-D1}, the function $D_2^{\ast}(\cdot)$, which is
  illustrated in Figure \ref{fig:reg-D}, is strictly decreasing.
  \begin{figure}[h]
    \centering
    \psfrag{d1}[cc][cc]{$D_1$}
    \psfrag{d2}[cc][cc]{$D_2$}
    \psfrag{d1min}[cc][cc]{$D_{1,\textnormal{min}}$}
    \psfrag{d2min}[cc][cc]{$D_{2,\textnormal{min}}$}
    \psfrag{d1str}[cc][cc]{$\sigma^2 \frac{N_1 + P(1-\rho^2)}{N_1+P}$}
    \psfrag{d2str}[cc][cc]{$\sigma^2 \frac{N_2 + P(1-\rho^2)}{N_2+P}$}
    \psfrag{D}[cc][cc]{\Large $\mathscr{D}$}
    \psfrag{d2strcurve}[cc][cc]{$D_2^{\ast}(D_1)$}
    \epsfig{file=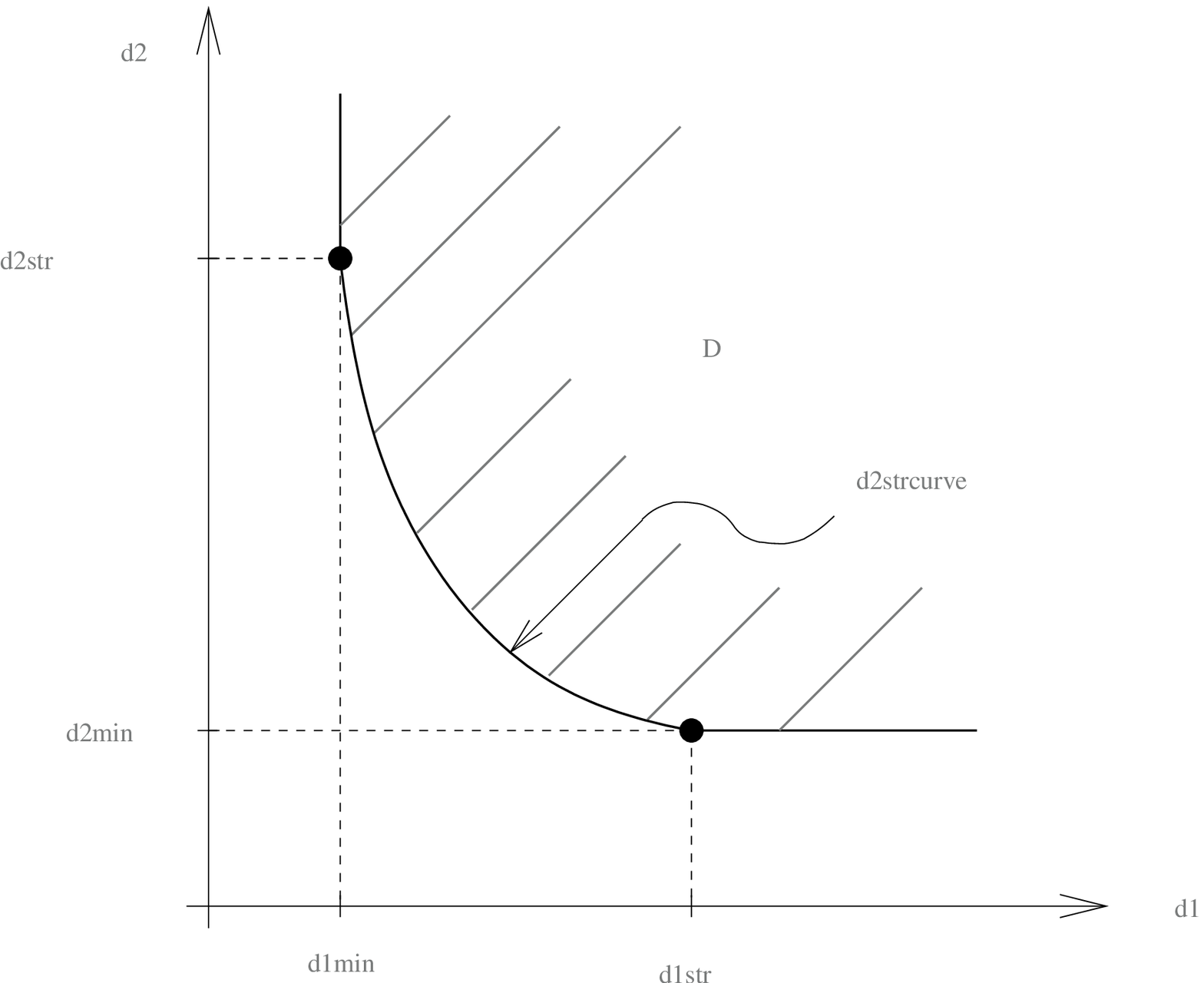, width=0.7\textwidth}
    \caption{Monotonicity of $D_2^{\ast}(\cdot)$.}
    \label{fig:reg-D}
  \end{figure}
  By Proposition~\ref{prp:d1star-of-d2min} we have that
  \begin{IEEEeqnarray}{rCl}\label{eq:D1*-D2min}
    D_1^{\ast}(D_{2,\textnormal{min}}) & = & \sigma^2 \frac{N_1 +
      P(1-\rho^2)}{N_1+P}.
  \end{IEEEeqnarray}
  From \eqref{eq:D1*-D2min} it follows that
  \begin{IEEEeqnarray}{rCl}\label{eq:no-smaller-than-D1}
    \bigg( D_1 < \sigma^2 \frac{N_1 + P(1-\rho^2)}{N_1+P} \bigg) & \qquad \Rightarrow
    \qquad & \bigg( D_2^{\ast}(D_1) > D_{2,\textnormal{min}} \bigg) .
  \end{IEEEeqnarray}
  By the convexity of $\mathscr{D}$ it follows that
  $D_2^{\ast}(\cdot)$ is a convex function. This combines with
  \eqref{eq:no-smaller-than-D1} and our assumption that
  \eqref{eq:range-D1} holds, to imply that $D_2^{\ast}(\cdot)$ is
  strictly decreasing in the interval\footnote{Let $g \colon (a,c)
    \rightarrow \Reals$ be a finite convex function and let $b \in
    (a,c)$. If $b$ is such that
    \begin{displaymath}
      x < b \quad \Rightarrow \quad g(x) > g(b),
    \end{displaymath}
    then $g$ is strictly decreasing in the interval $(a,b]$. Here we apply
    this with $a$ correspondig to $D_{1,\textnormal{min}}$, with $b$
    corresponding to the RHS of \eqref{eq:range-D1}, and with $c =
    \infty$. This can be proved using \cite[Corollary 24.2.1 and Theorem
    24.1]{rockafellar70}.}
  \begin{IEEEeqnarray*}{C}
    \left( D_{1,\textnormal{min}}, \sigma^2 \frac{N_1 + P(1-\rho^2)}{N_1 +
        P} \right],
  \end{IEEEeqnarray*}
  where the interval's end point equals the RHS of
  \eqref{eq:range-D1}.
\end{proof}

Based on Lemma \ref{lm:boundary-equality}, the proof of
Reduction~\ref{redc:D1-equality} follows easily.

\begin{proof}[Proof of Reduction~\ref{redc:D1-equality}]
  The reduction to optimal reconstructors is straightforward. Since
  every $(D_1,D_2) \in \mathscr{D}$ is achievable, it is certainly
  achievable by some sequence of schemes with optimal reconstructors.

  It remains to prove that it suffices to limit ourselves to pairs
  $(D_1,D_2) \in \mathscr{D}$ that are achievable by coding schemes
  that achieve $D_1$ with equality. To this end, we first note that by
  Definition~\ref{df:Di-star} it suffices to prove
  Lemma~\ref{lm:prf-main-thm} for pairs $(D_1,D_2^{\ast}(D_1)) \in
  \mathscr{D}$ where $D_1$ satisfies \eqref{eq:snr-threshold} and
  \eqref{eq:range-D1}. The proof now follows by Lemma
  \ref{lm:boundary-equality} which states that for such pairs any
  sequence of schemes achieving $(D_1,D_2^{\ast}(D_1))$  must achieve $D_1$ with
  equality.
\end{proof}

To continue with the proof of Lemma \ref{lm:prf-main-thm}, we next
derive a lower bound on $\delta_2^{(n)}$ (for finite
blocklengths~$n$).

\begin{lm}\label{lm:prf-main-lm}
  Let $(f^{(n)},\phi_1^{(n)},\phi_2^{(n)})$ be a coding scheme where
  $\phi_1^{(n)}$ and $\phi_2^{(n)}$ satisfy
  \eqref{eq:rec-orthog}. Then, for any $a_1, a_2$ satisfying $a_1 a_2
  \geq 0$,
  \begin{equation}\label{eq:lb-delta-2-n}
    \delta_2^{(n)} \geq \Psi(\delta_1^{(n)},a_1,a_2).
  \end{equation}
\end{lm}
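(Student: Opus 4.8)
The plan is to bound $\delta_2^{(n)}$ from below by introducing an auxiliary single-letter random variable at Receiver~1 that mimics a linear combination of the two source reconstructions, and then to relate this to a mutual-information quantity that the channel to Receiver~2 can support. First I would fix the coding scheme with the optimal (conditional-mean) reconstructors as in \eqref{eq:rec-orthog}, so that the orthogonality principle applies: $\E{(S_{i,k}-\hat S_{i,k})\hat S_{i,k}}=0$ and, more generally, $\E{(S_{i,k}-\hat S_{i,k})g({\bf Y}_i)}=0$ for any measurable $g$. Introduce the estimate $\hat{\bf S}_1=\E{{\bf S}_1|{\bf Y}_1}$ of ${\bf S}_1$ and the estimate $\tilde{\bf S}_2=\E{{\bf S}_2|{\bf Y}_1}$ of ${\bf S}_2$, both based on ${\bf Y}_1$; the latter is constrained in fidelity by $\tilde D_2^{\ast}(\delta_1^{(n)})$ through Proposition~\ref{prp:bivar-AWGN} and Definition~\ref{df:d2tstar} (up to $o(1)$ terms that vanish as $n\to\infty$, or exactly if one works with the finite-$n$ analog of $\tilde D_2^{\ast}$).

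The key step is to consider, for real $a_1,a_2$ of equal sign, the linear combination $U_k \eqdef a_1\hat S_{1,k}+a_2\hat S_{2,k}$ (a function of ${\bf Y}_1$) as a ``test reconstruction'' of the auxiliary source $V_k \eqdef S_{1,k}$ — or more precisely one computes $\frac1n\sum_k \E{(W_{1,k}-c_k)^2}$ for the right linear functional $c_k$ of $({\hat S}_{1,k},{\hat S}_{2,k})$, where ${\bf W}_1={\bf S}_1-\rho{\bf S}_2$ as in \eqref{eq:def-W1}. Expanding the square and invoking the orthogonality principle to kill the cross terms between $(S_{i,k}-\hat S_{i,k})$ and any function of ${\bf Y}_1$, one gets that the residual variance in estimating $W_{1,k}$ from ${\bf Y}_1$ is exactly $\sigma^2(1-\rho^2)$ minus a quantity that is expressible in $D_1=\delta_1^{(n)}$, $\tilde D_2^{\ast}$, $\rho$, $a_1$, $a_2$ — and this quantity is precisely $\sigma^2(1-\rho^2)-\eta(\delta_1^{(n)},a_1,a_2)$ with $\eta$ as in the statement of Lemma~\ref{lm:prf-main-thm}; the Cauchy--Schwarz bound $\E{\hat S_{1,k}\hat S_{2,k}}\le\sqrt{\E{\hat S_{1,k}^2}\E{\hat S_{2,k}^2}}$, together with $\E{\hat S_{i,k}^2}=\sigma^2-\E{(S_{i,k}-\hat S_{i,k})^2}$, is what produces the square-root term $2a_1a_2\sqrt{(\sigma^2-\delta_1^{(n)})(\sigma^2-\tilde D_2^{\ast})}$, and the ``equal sign'' hypothesis on $a_1,a_2$ is exactly what makes this cross term help rather than hurt. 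Hence
\begin{equation*}
  \frac1n\sum_{k=1}^n\E{(W_{1,k}-\E{W_{1,k}|{\bf Y}_1})^2}\;\ge\;\text{(something)}\;\Longrightarrow\; \frac1n\sum_{k=1}^n\E{\E{W_{1,k}|{\bf Y}_1}^2}\;\ge\;\sigma^2(1-\rho^2)-\eta(\delta_1^{(n)},a_1,a_2),
\end{equation*}
which lower bounds how much information ${\bf Y}_1$ retains about ${\bf W}_1$.

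From there I would convert this ``retained energy'' into a lower bound on $I({\bf W}_1;{\bf Y}_1)$ via rate-distortion / the Gaussian maximum-entropy bound (as in \eqref{eq:Dw} run in reverse: small estimation error on a Gaussian forces large mutual information), then push this through to Receiver~2 using the degradedness $N_1<N_2$: since ${\bf Y}_2$ is a physically degraded version of ${\bf Y}_1$, the conditional entropy-power inequality argument already used in the proof of Lemma~\ref{lm:bc_cond-mut-info} gives an upper bound on $h({\bf Y}_2|{\bf S}_2)$, hence on $2^{\frac2n h({\bf Y}_2|{\bf S}_2)}$, and comparing with the rate-distortion lower bound $nR_{S_2}(\delta_2^{(n)})\le I({\bf S}_2;{\bf Y}_2)=h({\bf Y}_2)-h({\bf Y}_2|{\bf S}_2)$ yields $\delta_2^{(n)}\ge\frac{\sigma^2}{P+N_2}\bigl(\sigma^2(1-\rho^2)N_1/\eta(\delta_1^{(n)},a_1,a_2)+N_2-N_1\bigr)=\Psi(\delta_1^{(n)},a_1,a_2)$, which is \eqref{eq:lb-delta-2-n}. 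The main obstacle I anticipate is twofold: (i) getting the bookkeeping in the expansion of $\E{(W_{1,k}-a_1\hat S_{1,k}-a_2\hat S_{2,k})^2}$ exactly right so that the resulting expression matches $\eta$ verbatim — in particular correctly accounting for the $\rho{\bf S}_2$ offset between ${\bf S}_1$ and ${\bf W}_1$ and ensuring the Cauchy--Schwarz step is applied with the correct sign; and (ii) justifying that $\frac1n\sum_k\E{(S_{2,k}-\tilde S_{2,k})^2}$ may be replaced by $\tilde D_2^{\ast}(\delta_1^{(n)})$ up to an error that vanishes as $n\to\infty$, which is where Proposition~\ref{prp:bivar-AWGN}, the single-letterization of $R_{S_1,S_2}$, and the SNR-threshold hypothesis \eqref{eq:snr-threshold} all enter.
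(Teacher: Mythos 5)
There is a genuine gap, and it lies exactly in the step you flag as ``the key step.'' Your plan bounds the residual error in estimating ${\bf W}_1$ from ${\bf Y}_1$ \emph{alone}, using a test estimator built from $\hat{S}_{1,k}$ and $\tilde{S}_{2,k}=\E{S_{2,k}|{\bf Y}_1}$ (note, incidentally, that you write $\hat{S}_{2,k}$, which is Receiver~2's reconstruction and is \emph{not} a function of ${\bf Y}_1$; I read it charitably as $\tilde{S}_{2,k}$). The paper instead furnishes Receiver~1 with ${\bf S}_2$ as genuine side information: it defines $\Delta_1^{(n)}$ as the least distortion on ${\bf S}_1$ from $({\bf Y}_1,{\bf S}_2)$, uses the test estimator $\check{S}_{1,k}=a_1\hat{S}_{1,k}+a_2 S_{2,k}$ with the \emph{true} $S_{2,k}$, and then runs $nR_{S_1|S_2}(\Delta_1^{(n)})\le I({\bf S}_1;{\bf Y}_1|{\bf S}_2)$ into Lemma~\ref{lm:bc_cond-mut-info}. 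The distinction is not cosmetic: your intermediate claim --- that the MMSE of $W_{1,k}$ given ${\bf Y}_1$ is at most $\eta(\delta_1^{(n)},a_1,a_2)$ --- is false in general. Test it on the uncoded scheme with the coefficients $(a_1^{\ast},a_2^{\ast})$ of the theorem: there $\eta$ equals the side-information-aided MMSE $\E{(S_{1}-\E{S_1|Y_1,S_2})^2}=\sigma^2(1-\rho^2)N_1/(c^2\alpha^2\sigma^2(1-\rho^2)+N_1)$ (all steps of the paper's chain are tight for that scheme), whereas the MMSE of $W_1$ given $Y_1$ alone is strictly larger whenever $\alpha\rho+\beta\neq 0$, because $Y_1=c\alpha W_1+c(\alpha\rho+\beta)S_2+Z_1$ and the $S_2$-term acts as extra noise that the side information removes. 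So the route through $I({\bf W}_1;{\bf Y}_1)$ and estimators that are functions of ${\bf Y}_1$ only cannot deliver $\eta$; the conditioning on ${\bf S}_2$ is what the result needs, and it is why $\eta$ contains the terms $-a_2\sigma^2(2\rho-a_2)$, which come from the exact moments $\E{S_{1,k}S_{2,k}}=\rho\sigma^2$ and $\E{S_{2,k}^2}=\sigma^2$ of the true $S_2$, not of $\tilde{S}_2$.

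A second, related point: the square-root term in $\eta$ is not obtained by Cauchy--Schwarz between two Receiver-1 estimates. In the paper it bounds the cross moment $\frac1n\sum_k\E{\hat{S}_{1,k}S_{2,k}}$ (with the true $S_2$), and plain Cauchy--Schwarz would only give $\sqrt{(\sigma^2-\delta_1^{(n)})\,\sigma^2}$. The improvement to $\sqrt{(\sigma^2-\delta_1^{(n)})(\sigma^2-\tilde{D}_2^{\ast})}$ is Lemma~\ref{lm:bc_ub}, proved by observing that $c\,\hat{\bf S}_1$ is itself a valid Receiver-1 estimate of ${\bf S}_2$, so its distortion is at least $\tilde{D}_2^{\ast}(\delta_1^{(n)})$, and then optimizing $c$; this also settles your worry (ii), since the bound \eqref{eq:D2>tildeD2*} holds at every finite $n$ through the rate-distortion characterization of Remark~\ref{rmk:alt-d2t}, with no $o(1)$ terms. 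Your second half (rate-distortion at Receiver~1, then the conditional-EPI bound of Lemma~\ref{lm:bc_cond-mut-info} to reach $\delta_2^{(n)}$) does match the paper, but the proof only goes through once the estimation step is done with ${\bf S}_2$ as side information and with Lemma~\ref{lm:bc_ub} in place of Cauchy--Schwarz.
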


Lemma \ref{lm:prf-main-lm} relates the two reconstruction fidelities
$\delta_1^{(n)}$ and $\delta_2^{(n)}$. The difficulty in doing so is
that if we consider a scheme achieving some $\delta_2^{(n)}$ at
Receiver~2, then we can only derive bounds on entropy expressions that
are conditioned on ${\bf S}_2$. However, for a lower bound on
$\delta_1^{(n)}$ we would typically like to have an upper bound on
$I({\bf S}_1;\hat{\bf S}_1)$, or $I({\bf S}_1;{\bf Y}_1)$ (without
conditioning on ${\bf S}_2$.) To overcome this difficulty, we furnish
Receiver~1 with ${\bf S}_2$ as side-information, and then prove Lemma
\ref{lm:prf-main-lm} using Lemma~\ref{lm:bc_cond-mut-info} and the
following upper bound.
\begin{lm}\label{lm:bc_ub}
  If a scheme $(f^{(n)},\phi_1^{(n)},\phi_2^{(n)})$ satisfies the
  orthogonality condition
  \begin{equation}\label{eq:orthogonality}
    \E{(S_{1,k} - \hat{S}_{1,k})\hat{S}_{1,k}} = 0 \qquad
    \text{for every } 0 \leq k \leq n,
  \end{equation}
  then
  \begin{equation}\label{eq:ub-ES1hS2}
    \frac{1}{n} \sum_{k=1}^n \E{\hat{S}_{1,k} S_{2,k}} \leq
    \sqrt{ \left( \sigma^2-\delta_1^{(n)} \right) \left(
        \sigma^2-\tilde{D}_2^{\ast}(\delta_1^{(n)}) \right)}.
  \end{equation}
\end{lm}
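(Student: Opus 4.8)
\textbf{Proof plan for Lemma~\ref{lm:bc_ub}.}
The plan is to bound $\frac{1}{n}\sum_k \E{\hat S_{1,k} S_{2,k}}$ by a Cauchy--Schwarz-type argument, but one that is tight in the relevant sense. Write $P_{1} \eqdef \frac1n\sum_k \E{\hat S_{1,k}^2}$. By the orthogonality condition \eqref{eq:orthogonality} we have $\E{\hat S_{1,k}^2} = \E{S_{1,k}\hat S_{1,k}}$, so the reconstruction error satisfies $\frac1n\sum_k\E{(S_{1,k}-\hat S_{1,k})^2} = \sigma^2 - P_{1}$, i.e. $\delta_1^{(n)} = \sigma^2 - P_{1}$ and hence $P_{1} = \sigma^2 - \delta_1^{(n)}$. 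This takes care of the first factor on the RHS of \eqref{eq:ub-ES1hS2}. It remains to show
\begin{equation*}
  \frac1n\sum_{k=1}^n \E{\hat S_{1,k} S_{2,k}} \;\le\; \sqrt{P_{1}\,\bigl(\sigma^2 - \tilde D_2^{\ast}(\delta_1^{(n)})\bigr)}.
\end{equation*}

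The second factor is where the definition of $\tilde D_2^{\ast}$ enters. The key observation is that $\hat{\bf S}_1$ is a function of ${\bf Y}_1$, so $\frac1n\sum_k \E{\hat S_{1,k} S_{2,k}}$ is the cross-correlation, for that scheme, between the true ${\bf S}_2$ and a particular ${\bf Y}_1$-measurable statistic. I would therefore consider the estimator of ${\bf S}_2$ from ${\bf Y}_1$ given by $\tilde S_{2,k} = c\,\hat S_{1,k}$ for a scalar $c$ to be optimized. Its distortion is $\frac1n\sum_k\E{(S_{2,k}-c\hat S_{1,k})^2} = \sigma^2 - 2c\,\frac1n\sum_k\E{S_{2,k}\hat S_{1,k}} + c^2 P_1$, and minimizing over $c$ gives a distortion of $\sigma^2 - \bigl(\frac1n\sum_k\E{S_{2,k}\hat S_{1,k}}\bigr)^2/P_1$. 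Since this is a (not necessarily optimal) estimator of ${\bf S}_2$ from ${\bf Y}_1$ under a scheme whose Receiver-1 distortion on ${\bf S}_1$ is $\delta_1^{(n)}$ (this is where the orthogonality of $\phi_1^{(n)}$ and the equality $\delta_1^{(n)} = \sigma^2 - P_1$ are used, so that the scheme is admissible for the definition of $\tilde D_2^{\ast}$ at the argument $\delta_1^{(n)}$ — modulo the $\varlimsup$/finite-$n$ bookkeeping, which should be handled by passing to the limit or by noting $\tilde D_2^{\ast}$ is defined via an infimum over admissible schemes), we get that the left-hand side above is at least $\tilde D_2^{\ast}(\delta_1^{(n)})$. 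Rearranging,
\begin{equation*}
  \Bigl(\tfrac1n\sum_{k=1}^n\E{S_{2,k}\hat S_{1,k}}\Bigr)^2 \;\le\; P_1\bigl(\sigma^2 - \tilde D_2^{\ast}(\delta_1^{(n)})\bigr),
\end{equation*}
and taking square roots (the LHS of \eqref{eq:ub-ES1hS2} may be assumed nonnegative, else there is nothing to prove) together with $P_1 = \sigma^2 - \delta_1^{(n)}$ yields \eqref{eq:ub-ES1hS2}.

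The main obstacle I anticipate is the matching of finite-blocklength quantities with the asymptotically-defined $\tilde D_2^{\ast}(\cdot)$: the definition of $\tilde D_2^{\ast}(D_1)$ involves a $\varlimsup$ over a sequence of schemes, whereas here we have a single blocklength-$n$ scheme with a fixed $\delta_1^{(n)}$. This needs to be reconciled — e.g. by applying the bound to the sequence of schemes under consideration and invoking $\varlimsup_n \delta_1^{(n)} = D_1$ together with continuity/monotonicity of $\tilde D_2^{\ast}$ as supplied by Proposition~\ref{prp:bivar-AWGN}, or by re-reading Definition~\ref{df:d2tstar} so that $\tilde D_2^{\ast}(\delta_1^{(n)})$ is literally the infimum admissible distortion at constraint level $\delta_1^{(n)}$ and the constant-$n$ scheme embeds into an admissible sequence. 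A secondary point to check is that no extra use of the hypotheses \eqref{eq:range-D1} or \eqref{eq:snr-threshold} is needed for this particular inequality beyond what is required to have $\tilde D_2^{\ast}(\delta_1^{(n)})$ well-defined and given by the closed form in Proposition~\ref{prp:bivar-AWGN}; the heavy lifting of combining this bound with Lemma~\ref{lm:bc_cond-mut-info} to produce \eqref{eq:lb-delta-2-n} is deferred to the proof of Lemma~\ref{lm:prf-main-lm}.
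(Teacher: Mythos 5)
Your proposal is correct and takes essentially the same route as the paper: both view $c\,\hat{\bf S}_1$ as a valid Receiver-1 estimate of ${\bf S}_2$, invoke $\tilde{D}_2^{\ast}(\delta_1^{(n)})$ as a lower bound on its distortion, and use the orthogonality condition to identify $\frac{1}{n}\sum_{k}\E{\hat{S}_{1,k}^2}=\sigma^2-\delta_1^{(n)}$, the only cosmetic difference being that you minimize over $c$ whereas the paper plugs in the particular choice $c=\sqrt{(\sigma^2-\tilde{D}_2^{\ast})/(\sigma^2-\delta_1^{(n)})}$, which gives the same inequality after rearranging. The finite-$n$ versus asymptotic-definition bookkeeping you flag is likewise left implicit in the paper's proof and is resolved exactly as you suggest (a fixed blocklength-$n$ scheme embeds into an admissible sequence, so its distortion pair cannot beat the infimum defining $\tilde{D}_2^{\ast}$).
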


\begin{proof}
  The proof is based on the inequality
  \begin{IEEEeqnarray}{rCl}\label{eq:D2>tildeD2*}
    \frac{1}{n} \sum_{k=1}^n \E{(S_{2,k} - c \hat{S}_{1,k})^2} & \geq &
    \tilde{D}_2^{\ast}(\delta_1^{(n)}),
  \end{IEEEeqnarray}
  which holds for every $c \in \Reals$ because the scaled sequence $c
  \hat{\bf S}_1$ is a valid estimate of ${\bf S}_2$ at Receiver 1. The
  desired bound now follows by evaluating the LHS of this inequality
  for the choice of
  \begin{IEEEeqnarray}{rCl}\label{eq:choice-c-lemma}
    c & = & \sqrt{\frac{\sigma^2 - \tilde{D}_2^{\ast}}{\sigma^2 - \delta_1^{(n)}}},
  \end{IEEEeqnarray}
  where we have used the shorthand notation $\tilde{D}_2^{\ast}$ for
  $\tilde{D}_2^{\ast}(\delta_1^{(n)})$. Indeed, from
  \eqref{eq:D2>tildeD2*} and \eqref{eq:choice-c-lemma} we obtain
  \begin{IEEEeqnarray}{rCl}
    \tilde{D}_2^{\ast} & \leq & \frac{1}{n} \sum_{k=1}^n \E{(S_{2,k} - c
      \hat{S}_{1,k})^2} \nonumber\\[1mm]
    & = & \frac{1}{n} \sum_{k=1}^n \E{S_{2,k}^2} - 2 c \frac{1}{n} \sum_{k=1}^n
    \E{S_{2,k} \hat{S}_{1,k}} + c^2 \frac{1}{n} \sum_{k=1}^n
    \E{\hat{S}_{1,k}^2} \nonumber\\[3mm]
    & = & \sigma^2 - 2 \sqrt{\frac{\sigma^2 -
        \tilde{D}_2^{\ast}}{\sigma^2 - \delta_1^{(n)}}} \frac{1}{n} \sum_{k=1}^n
    \E{S_{2,k} \hat{S}_{1,k}} + \sigma^2 - \tilde{D}_2^{\ast}, \label{eq:interm-lm1}
  \end{IEEEeqnarray}
  where in the last step we replaced $c$ by its explicit value and
  used the property that the normalized summation over
  $\E{\hat{S}_{1,k}^2}$ equals $\sigma^2 - \delta_1^{(n)}$, which
  follows from \eqref{eq:orthogonality}. Rearranging terms in
  \eqref{eq:interm-lm1} gives
  \begin{IEEEeqnarray*}{rCl}
    \hspace{32mm} \frac{1}{n} \sum_{k=1}^n \E{S_{2,k} \hat{S}_{1,k}} &
    \leq & \sqrt{(\sigma^2 - \delta_1^{(n)})(\sigma^2 -
      \tilde{D}_2^{\ast})}. \hspace{32mm} \qedhere
  \end{IEEEeqnarray*}
\end{proof}

We are now ready to prove Lemma~\ref{lm:prf-main-lm}.

\begin{proof}[Proof of Lemma \ref{lm:prf-main-lm}]
  Denote by $\Delta_1^{(n)}$ the least distortion that can be achieved
  on ${\bf S}_1$ at Receiver~1 when ${\bf S}_2$ is provided as
  side-information. The proof follows from a lower bound on
  $\delta_2^{(n)}$ as a function of $\Delta_1^{(n)}$ and from an upper
  bound on $\Delta_1^{(n)}$ as a function of $\delta_1^{(n)}$.

  We first derive the lower bound on $\delta_2^{(n)}$. To this end,
  let $R_{S_1|S_2} ( \cdot )$ denote the rate-distortion function on
  ${\bf S}_1$ when ${\bf S}_2$ is given as side-information to both,
  the encoder and the decoder. Thus, for every $0 < \Delta_1 \leq
  \sigma^2(1-\rho^2)$,
  \begin{equation}\label{eq:RD-SI}
    R_{S_1|S_2} ( \Delta_1 ) = \frac{1}{2} \log_2 \left(
      \frac{\sigma^2(1-\rho^2)}{\Delta_1} \right).
  \end{equation}
  Since Receiver 1 is connected to the transmitter by a point-to-point
  link,
  \begin{equation}\label{eq:ub-RD-SI}
    nR_{S_1|S_2} (\Delta_1^{(n)}) \leq I({\bf S}_1;{\bf Y}_1|{\bf S}_2).
  \end{equation}
  The lower bound on $\delta_2^{(n)}$ now follows from upper bounding
  the RHS of \eqref{eq:ub-RD-SI} by means of Lemma
  \ref{lm:bc_cond-mut-info}, and rewriting the LHS of
  \eqref{eq:ub-RD-SI} using \eqref{eq:RD-SI}. This yields
  \begin{equation}\label{eq:lb-delta1}
    \delta_2^{(n)} \geq \frac{\sigma^2}{P+N_2} \left( \frac{\sigma^2
        (1-\rho^2)N_1}{\Delta_1^{(n)}} +N_2-N_1\right).
  \end{equation}

  We next derive the upper bound on $\Delta_1^{(n)}$ by considering
  the distortion of a linear estimator of ${\bf S}_1$ when Receiver~1
  has ${\bf S}_2$ as side-information. More precisely, we consider the
  linear estimator
  \begin{displaymath}
    \check{S}_{1,k} = a_1 \hat{S}_{1,k} + a_2 S_{2,k}, \qquad
    k \in \{ 1,\ldots ,n \},
  \end{displaymath}
  where, as we will see, the coefficients $a_1$, $a_2$ correspond to
  those in Lemma~\ref{lm:prf-main-thm}. To analyze the distortion
  associated with $\check{\bf S}_1$, first note that by
  \eqref{eq:rec-orthog} the orthogonality condition of
  \eqref{eq:orthogonality} is satisfied.  Since $\check{\bf S}_1$ is a
  valid estimate of ${\bf S}_1$ at Receiver~1 when ${\bf S}_2$ is
  given as side-information, we thus obtain
  \begin{IEEEeqnarray}{rCl}\label{eq:ub-delta1}
    \Delta_1^{(n)} & \leq & \frac{1}{n} \sum_{k=1}^n \E{(S_{1,k} -
      \check{S}_{1,k})^2} \nonumber\\
    & = & \sigma^2 - 2a_1 \left( \frac{1}{n} \sum_{k=1}^n
      \E{S_{1,k}\hat{S}_{1,k}} \right) - 2 a_2 \rho \sigma^2 + a_1^2
    \left( \frac{1}{n} \sum_{k=1}^n \E{\hat{S}_{1,k}^2} \right) \nonumber\\
    & & {}+2a_1a_2 \left( \frac{1}{n} \sum_{k=1}^n
      \E{S_{1,k}\hat{S}_{2,k}} \right) + a_2^2 \sigma^2 \nonumber\\[3mm]
    & \stackrel{a)}{=} & \sigma^2 - 2a_1 (\sigma^2-\delta_1^{(n)}) - 2 a_2
    \rho \sigma^2 + a_1^2 (\sigma^2-\delta_1^{(n)})\nonumber \\
    & & {}+ 2a_1a_2 \left( \frac{1}{n} \sum_{k=1}^n
      \E{S_{1,k}\hat{S}_{2,k}} \right) + a_2^2 \sigma^2, \nonumber\\[3mm]
    & \stackrel{b)}{\leq} & \sigma^2 - a_1
    (\sigma^2-\delta_1^{(n)})(2-a_1) - a_2 \sigma^2 (2\rho-a_2) \nonumber\\
    & & {}+ 2a_1a_2
    \sqrt{ \left( \sigma^2-\delta_1^{(n)} \right) \left(
        \sigma^2-\tilde{D}_2^{\ast}(\delta_1^{(n)}) \right)}.
  \end{IEEEeqnarray}
  where in step $a)$ we have used that the normalized summations over
  $\mat{E}\big[ \hat{S}_{1,k}^2 \big]$ and $\mat{E} \big[
  S_{1,k}\hat{S}_{2,k} \big]$ are both equal to
  $\sigma^2-\delta_1^{(n)}$, which follows by
  \eqref{eq:orthogonality}; and in step $b)$ we have used
  Lemma~\ref{lm:bc_ub} and the assumption that $a_1 a_2 \geq 0$.

  The lower bound on $\delta_2^{(n)}$ of Lemma \ref{lm:prf-main-lm}
  now follows easily: Since the RHS of \eqref{eq:lb-delta1} is
  monotonically decreasing in $\Delta_1^{(n)}$, combining
  \eqref{eq:ub-delta1} with \eqref{eq:lb-delta1} gives
  \begin{displaymath}
    \delta_2^{(n)} \geq \frac{\sigma^2}{P+N_2} \left( \frac{\sigma^2
        (1-\rho^2)N_1}{\eta(\delta_1^{(n)},a_1,a_2)} +N_2-N_1\right),
  \end{displaymath}
  where we have denoted by $\eta(\delta_1^{(n)},a_1,a_2)$ the RHS of
  \eqref{eq:ub-delta1}.
\end{proof}

Based on Lemma~\ref{lm:prf-main-lm}, the proof of
Lemma~\ref{lm:prf-main-thm} now follows easily.

\begin{proof}[Proof of Lemma \ref{lm:prf-main-thm}]
  We show that for any nonnegative $a_1,a_2$, the achievable
  distortion $D_2$ is lower bounded by
  \begin{displaymath}
    D_2 \geq \Psi \left( D_1, a_1, a_2 \right).
  \end{displaymath}
  By Reduction \ref{redc:D1-equality} it suffices to show this for
  coding schemes $\{ f^{(n)} \}$, $\big\{ \phi_1^{(n)} \big\}$,
  $\big\{ \phi_2^{(n)} \big\}$ with $\phi_1^{(n)}$ and $\phi_2^{(n)}$
  given in \eqref{eq:rec-orthog} and with associated normalized
  distortions $\{ \delta_1^{(n)} \}$, $\{ \delta_2^{(n)} \}$
  satisfying
  \begin{equation}\label{eq:lim-deltai}
    \varlimsup_{n \rightarrow \infty} \delta_1^{(n)} = D_1, \quad
    \text{and} \quad \varlimsup_{n \rightarrow \infty} \delta_2^{(n)} \leq D_2,
  \end{equation}
  where $D_1$ satisfies \eqref{eq:range-D1}.
  By \eqref{eq:lim-deltai} there exists a subsequence $\{ n_k \}$,
  tending to infinity, such that
  \begin{equation}\label{eq:subseq-D1}
    \lim_{k \rightarrow \infty} \delta_1^{(n_k)} = D_1.
  \end{equation}
  Hence,
  \begin{align*}
    D_2 &\stackrel{a)}{\geq} \varlimsup_{n \rightarrow \infty} \delta_2^{(n)}\\
    &\geq \varlimsup_{k \rightarrow \infty} \delta_2^{(n_k)}\\
    &\stackrel{b)}{\geq} \varlimsup_{k \rightarrow \infty} \Psi
    (\delta_1^{(n_k)},a_1,a_2)\\
    &\stackrel{c)}{=} \Psi (D_1,a_1,a_2),
  \end{align*}
  where $a)$ follows from \eqref{eq:lim-deltai}; $b)$ follows from
  Lemma \ref{lm:prf-main-lm}; and $c)$ follows from
  \eqref{eq:subseq-D1} and from the continuity of
  $\Psi(\delta,a_1,a_2)$ with respect to $\delta$ --- a continuity
  which can be argued from \eqref{eq:Psi} as follows. The function
  $\Psi(\cdot)$ depends on $\delta$ only through
  $\eta(\delta,a_1,a_2)$, and $\eta(\delta,a_1,a_2)$ is strictly
  positive for all $P/N_1 > 0$ and all $a_1,a_2$, and it is continuous
  in $\delta$ because, by \eqref{eq:d2t-star},
  $\tilde{D}_2^{\ast}(\delta)$ is continuous in $\delta$. Hence,
  $\Psi(\cdot)$ is continuous in $\delta$.
\end{proof}

\end{document}